\newtheorem{lemma}{Lemma}
\newtheorem{theorem}{Theorem}
\newtheorem{definition}{Definition}
\newtheorem{observation}{Observation}
\newtheorem{fact}{Fact}
\newcommand{\Rmnum}[1]{\expandafter\@slowromancap\romannumeral #1@}
\title{
	\textbf{Combinatorial Geometry of Graph Partitioning - \Rmnum{1}}
}
\author{	\Large{ Manjish Pal \footnote{A significant portion of this work was done when the author was a B-Tech, M-Tech dual degree
                                student at IIT-Kanpur, India} }\\\\
  \large{
               Princeton University }\\
	\large{{\tt mpal@cs.princeton.edu}}\\
}
\date{}
\begin{document}

\pagestyle{empty}

\maketitle

\thispagestyle{empty}

\begin{abstract}
The {\sc $c$-Balanced Separator} problem is a graph-partitioning problem in which given a graph 
$G$, one aims to find a cut of minimum size such that both the sides of the cut
have at least $cn$ vertices. In this paper, we present new directions of progress in the {\sc $c$-Balanced Separator} problem. 
More specifically, we propose a family of mathematical programs, 
that depend upon a parameter $p > 0$, and is an extension of the uniform version of the SDPs proposed
by Goemans and Linial for this problem. In fact for the case, when $p=1$, if one can solve this program
in polynomial time then simply using the Goemans-Williamson's randomized rounding algorithm for {\sc Max Cut} \cite{WG95} will give an
$O(1)$-factor approximation algorithm for  {\sc $c$-Balanced Separator} improving the best known approximation
factor of $O(\sqrt{\log n})$ due to Arora, Rao and Vazirani \cite{ARV}. This family of programs is not convex but one can transform them 
into so called \emph{\textbf{concave programs}} in which one optimizes a concave function over a convex feasible set. 
It is well known that the optima of such programs lie at one of the extreme points of the feasible set \cite{TTT85}.
Our main contribution is  a combinatorial characterization of some extreme points of the feasible set of the mathematical program, for $p=1$
case, which to the best of our knowledge is the first of its kind. We further demonstrate how this characterization can be
used to solve the program in a restricted setting. Non-convex programs have recently been investigated by Bhaskara and Vijayaraghvan \cite{BV11} 
in which they design algorithms for approximating Matrix $p$-norms although their algorithmic techniques are analytical in nature. 
It is important to note that the properties of concave programs allows one to apply techniques due to Hoffman \cite{H81} or Tuy \emph{et al} \cite{TTT85} 
to solve such problems with arbitrary accuracy that, for special forms of concave programs, converge in polynomial time.
\end{abstract}

\newpage

\section{Introduction}
Graph partitioning is a problem of fundamental importance both in practice and theory. Many problems belonging 
to the several areas of computer science namely clustering, PRAM emulation, VLSI layout, packet routing in networks
can be modeled as partitioning a graph into two or more parts ensuring that the number of edges in the cut is ``small''. The word ``small'' doesn't refer to finding the min-cut in the graph as it doesn't ensure that the number of vertices in both sides of the cut is large. To enforce this balance condition one needs to normalize the cut-size in some sense. For the known notions of normalization like \emph{conductance}, \emph{expansion} and \emph{sparsity}, finding optimal separators is NP-hard for general graphs. Hence, the objective is to look for efficient approximation algorithms.
Because of the huge amount of work done to design good approximation algorithm for these problems, 
graph partitioning has become one of the central objects of study in the theory of geometric 
embeddings and random walks. Two fundamental problems which we will focus on 
are {\sc Sparsest Cut} and {\sc Balanced Separator}. These graph partitioning problems originally
came up in the context of multi-commodity flows in which we are given a graph with capacities
on the edges and a set of pairs of vertices (also called source-destination pairs) each having
a demand and the aim is to find a cut that minimizes the ratio of capacity of the cut and the total demand
through the cut. When the demand and capacities are all unit then the problem is called \emph{uniform}
and in case of general demands and capacities the problem is called \emph{non-uniform}. 

\subsection{Uniform Version}
The first approximation algorithm for such graph partitioning problems,
came out of the study of Reimannian manifolds in form of the well known Cheegar's Inequality \cite{C70}
which says that if $\Phi(G)$ is the conductance of the graph and $\lambda$ is the second largest 
eigenvalue of graph Laplacian then $ 2\Phi(G) \geq \lambda \geq \Phi(G)^2/2$. 
Because of the quadratic factor in the lower bound, the true approximation is $\frac{1}{\Phi(G)}$ which 
in worst case can be $\Omega(n)$ in worst case. The first true approximation algorithm for {\sc Sparsest Cut}
and {\sc Graph Conductance} was designed by Leighton and Rao \cite{LR99} whose approximation factor was $O(\log n)$. 
This also gave an $O(\log n)$  \textbf{\emph{pseudo-approximation algorithm}} for {\sc $c$-Balanced Separator}. This
algorithm is referred to as a pseudo-approximation algorithm because 
instead of returning a $c$-balanced cut, it returns a $c'$-balanced cut for some fixed $c'<c$ whose 
expansion is at most $O(\log n)$ times the optimum expansion of best $c$-balanced cut. Their algorithm
was based on an LP framework motivated from the idea of Multi-commodity flows. Their main contribution
was to derive an approximate max-flow, min-cut theorem corresponding to multi-commodity flow problem 
and the sparsest cut. Subsequently, a number of results
were discovered which showed that good approximation algorithms exist when one is considering
extreme cases such as the number of edges in the graphs is either very small or very large. In fact,
it is known that for planar graphs one can find balanced cuts which are twice as optimal \cite{GSV94} and for 
graph with an average degree of $\Omega (n)$, one can design $(1+\epsilon)$-factor approximation algorithms
where $\epsilon > 0$ with running time polynomial in input size \cite{AKK} (such an algorithm is called a 
\emph{Polynomial Time Approximation Scheme} or PTAS). The approximation factor
of $O(\log n)$ was improved to $O(\sqrt{\log n})$ in a breakthrough paper by Arora, Rao and Vazirani \cite{ARV}.
Their algorithm is based on semi-definite relaxations of these problems . 
The techniques and geometric structure theorems proved in their paper have subsequently led to breakthroughs
in the field of metric embeddings. The basic philosophy behind these approximation algorithms is to embed the
vertices of the input graph in an abstract space and derive a ``nice'' cut in this space. Recently, following a series of papers 
graph expansion has been related to the {\sc Unique Games} that ultimately led to sub-exponential time algorithms for 
{\sc Unique Games} \cite{ABS10}.

\subsection{Non-uniform Version}
The non-uniform version of the cut problems is inextricably linked with low distortion metric embedding.
It is easy to see that cut problems can be framed as optimization over $l_1$ metric which in general is NP-Hard.
So the incentive is to embed the points in a space on which one can optimize efficiently for eg. the $l_2^2$ metric
over which can optimize using SDPs. More specifically, using ideas from {\sf ARV} and the measured descent technique
of \cite{KLMN05}, firstly Lee \cite{L05} gave an $O(\log n)$ approximation algorithm for the non-uniform {\sc Sparsest Cut},
which was later improved to  $O(\log ^{3/4} n)$ by Chawla, Gupta and Rache \cite {CGR08}. A major breakthrough came from Arora, Lee
and Naor \cite{ALN08} who improved this bound to  $O(\sqrt{\log n} \log \log n)$ almost matching an old lower bound due to Enflo \cite{E69}
which says that there is an $n$ point metric in $l_1$ which need $\Omega (\sqrt{\log n})$ distortion to be embedded into
$l_2$.  

\subsection{Negative Results}
Graph partitioning problems like {\sc Sparsest Cut} and {\sc Balanced Separator}
are considered to among the few NP-hard problems which have resisted various attempts to prove inapproximability 
results. After the result of {\sf ARV}, there has been a lot of impetus towards proving lower bounds on approximation factors. It has been 
shown by Ambuhl et al \cite{AMS07} that {\sc Sparsest Cut} can't have a PTAS unless NP-complete problems
can be solved in randomized sub-exponential time. Because of the strong connections 
between semi-definite programming and the {\sc Unique Games Conjecture (UGC)} of Khot \cite{Kh02}, inapproximability
results are also known which assume UGC. More specifically, in a breakthrough result, Khot and Vishnoi \cite{KV05} showed that 
UGC implies super-constant lower bounds on the approximation factor for the non-uniform version of the problems. 
Lee and Naor \cite{LN} gave an analytical proof of the result that by exhibiting an $n$ point metric on the Heisenberg Group 
that is of negative type and needs $\omega(1)$ distortion to be embedded in $l_1$. Recently, it has been shown by 
Cheeger, Kleiner and Naor \cite{CKN09}, that the integrality gap of the non-uniform version of the sparsest cut SDP is $\Omega (\log^{O(1)} n)$. 
Devanur et al \cite{DKSV06} showed that the integrality gap of the SDP relaxation of Arora-Rao-Vazirani is $\Omega(\log \log n)$ 
thereby disproving the original conjecture of ARV that the integrality gap of their SDP relaxation is atmost a constant.   

\subsection{Non-Convex Programming}
In this paper we work with a form of non-convex programs called
\emph{\textbf{Concave Programming}}.  In order to define concave programming one first needs to define a concave
function.  A function $f:\mathbb{R}^d \rightarrow \mathbb{R}$ with domain \textbf{dom}$f$ is said to be concave if 
\textbf{dom}$f$ is convex and for all $x,y \in$ \textbf{dom}$f$, $f(\lambda x + (1-\lambda) y) \geq \lambda f(x) + (1-\lambda)f(y)$ for all $\lambda \in [0-1]$. 
Therefore, $f$ is concave iff $-f$ is a convex function. 
Based on this definition one defines concave programming as a form of mathematical programming 
in which one optimizes a concave function over a convex feasible set. 
More formally, a concave programming problem can be written as
$ \left[ \min_{x \in C} f(x) \right ]$
where $C$ is a convex set in $\mathbb{R}^d$ and $f$ is a concave function. 
The following is well known result for concave programming \cite{H76}.
\begin{fact}
For every concave programming problem there is an extreme point
of the convex feasible set $C$ which globally minimizes the optimization problem.
\end{fact}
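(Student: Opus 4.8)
The plan is to deduce this from the Minkowski--Carath\'eodory representation of points in a compact convex set together with Jensen's inequality for concave functions. First I would note the standing assumption implicit in the statement: the global minimum is attained, so in particular it is natural to take $C$ to be a compact (closed and bounded) convex subset of $\mathbb{R}^d$ — for a polytope or any compact convex body the set of extreme points is nonempty, and I will comment briefly on how the argument degenerates (the minimum can fail to be attained, or escape to infinity along a recession direction) when $C$ is unbounded.

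The first step is to invoke Minkowski's theorem: every point of a compact convex set $C \subseteq \mathbb{R}^d$ lies in the convex hull of the extreme points of $C$. Combined with Carath\'eodory's theorem, this means that any global minimizer $x^\star \in C$ can be written as a finite convex combination $x^\star = \sum_{i=1}^{k} \lambda_i v_i$, where each $v_i$ is an extreme point of $C$, $\lambda_i \ge 0$, and $\sum_i \lambda_i = 1$ (indeed one may take $k \le d+1$).

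The second step applies the definition of concavity. Iterating the two-point inequality $f(\lambda x + (1-\lambda)y) \ge \lambda f(x) + (1-\lambda) f(y)$ gives the finite form $f\!\left(\sum_{i=1}^k \lambda_i v_i\right) \ge \sum_{i=1}^k \lambda_i f(v_i)$. Hence
\[
 f(x^\star) \;\ge\; \sum_{i=1}^k \lambda_i f(v_i) \;\ge\; \min_{1 \le i \le k} f(v_i).
\]
Let $v_{i_0}$ attain this minimum among the $v_i$. Then $f(v_{i_0}) \le f(x^\star)$. On the other hand, $x^\star$ is a global minimizer over $C$ and $v_{i_0} \in C$, so $f(x^\star) \le f(v_{i_0})$. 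Therefore $f(v_{i_0}) = f(x^\star)$, i.e.\ the extreme point $v_{i_0}$ is itself a global minimizer, which is exactly the claim.

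The main obstacle is not the algebra but the structural input: one must ensure that extreme points exist and that $C$ is their convex hull, which is precisely where compactness (and finite dimensionality, via Carath\'eodory) is used; in the non-compact case the statement should be read with the caveat that a minimizing extreme point exists \emph{provided} the infimum is finite and attained. In our applications $C$ will be bounded and polyhedral-like, so the hypotheses of Minkowski's theorem are met and the argument above goes through verbatim.
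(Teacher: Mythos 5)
Your proof is correct and is the standard argument (Minkowski--Carath\'eodory plus the finite Jensen inequality for concave functions); the paper itself offers no proof of this Fact, merely citing the concave-programming literature, and your argument is exactly the canonical one found there. Your caveat about compactness is also apt: without it the set of extreme points may be empty or fail to generate $C$, so the statement must indeed be read under the assumption that the minimum is attained on a compact (or at least line-free, closed) feasible set.
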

The first algorithm for concave programming was designed by Tuy \cite{T64} in a restricted scenario
when the feasible set is a polytope. A more general case, when the feasible set is convex
but not necessarily polyhedral, was solved by Horst \cite{H76} and subsequently by Hoffman \cite{H81}, Tuy and
Thai \cite{TT81}. General concave programming is NP-hard as $\{0,1\}$-integer programming can
be cast as a concave program. There has been work towards designing efficient algorithms 
for some special class of concave programming. A comprehensive list of works done in concave programming 
can be found in Vaserstein's homepage \cite{V}. Recently, using analytical techniques Bhaskara and
Vijayaraghvan \cite{BV11} have successfully used non-convex programs to design algorithms
for approximating matrix $p-$norms. \\
 
\subsection {Our Contributions }
Our main contribution is to initiate the study of combinatorial geometric properties of a
non-convex relaxation for the {\c- Balanced Separator} problem. We show that
an efficient solution to our proposed program will imply improved an approximation algorithm
for this problem. In section 2, we formally introduce the notions of sparsity and balanced cuts and
sketch the Semi-Definite relaxation for $c$-{\sc Balanced Separator}
of {\sf ARV}. We then start section 4 by introducing a family of relaxations for $c$-{\sc Balanced Separator} 
which is generated by a parameter $p > 0$ and show that one can use its solution to design
an $O(1)$	-factor approximation algorithm for the problem.  Our result, although conditional, proposes new directions of progress on this problem 
and also a family of optimization problems which are more powerful than semi-definite programs 
in the context of approximation algorithms. Section 5 and 6 are devoted to find interesting properties on the geometry of the
feasible region of our program and in section 7 we show how these properties can be used to design an
efficient algorithm to search over a subset of extreme point called \emph {vertices}. We end the paper with Section 8
in which we present conclusions and future directions.
 
\section {Problem Definition}
We now formally  define the versions of balanced graph partitioning problem that we
focus on, in this paper. 
\textbf{ $c$-{\sc Balanced Separator}} \footnote{In \cite{ARV} $c$-{\sc Balanced Separator} is defined as 
the minimum sparsity of $c$-balanced cuts, we will be working with a definition which upto constant factors 
is equivalent to their definition} \\
Given a graph $G = (V,E)$ with $|V| = n, |E| = m$, the $c$-{\sc Balanced Separator} problem is to find $\alpha_c(G)$ where 
$\displaystyle \alpha_c(G) = \min_{S \subset V, cn < |S| < (1-c)n} E(S,\bar{S})$. \\
Although out techniques can potentially be generalized to {\sc Sparsest Cut} \footnote{Given a graph $G = (V,E)$ with $|V| = n, |E| = m$, for each cut $(S, \bar{S})$ define \emph{sparsity} of the cut to be the quantity $A(S) = \frac{|E(S, \bar{S})|}{|S||\bar{S}|}$. 
The uniform sparsest cut problem is to find $\alpha(G)$ where \\
$\displaystyle \alpha(G) = \min_{S \subset V} A(S)$.} and other balanced graph partitioning problems.

\subsection{SDP Relaxation for {\sc $c$-Balanced Separator}}
Unifying the spectral and the metric based (linear programming) approaches, {\sf ARV} used the following SDP
relaxation to get an improved (pseudo)-approximation algorithm
for the $c$-{\sc Balanced Separator}. Let us call this program $SDP_{BS}$,

\begin{center}
$\displaystyle \min \frac{1}{4}\sum_{i,j \in E} \|v_i - v_j\| ^2 $\\
$\displaystyle \|v_i\|^2 = 1 \quad \quad \forall i $\\
$\displaystyle \|v_i - v_j\|^2 + \|v_j - v_k\| ^2 \geq \|v_i - v_k\| ^2  \quad \quad \forall i,j,k$ \\
$\displaystyle \sum_{i<j} \|v_i - v_j\| ^2 \geq 4c(1-c)n^2$
\end{center}

It is easy to see that this indeed is a \emph{vector program} (and hence an SDP)
and is a relaxation for the $c$-{\sc Balanced Separator} 
problem. To show that this is a relaxation we have to show that for every cut we can get an assignment of vectors
such that all the constraints are satisfied and the value of the objective function
is the size of the cut. Given a cut $(S,\bar{S})$ if one maps all the vertices in 
$S$ to a unit vector $\textbf{n}$ and the vertices in $\bar{S}$ to $-\textbf{n}$ then the value
of the function is indeed the cardinality of $E(S,\bar{S})$. 
The main idea behind their algorithm is to show that for any set of vectors which satisfy the constraints
of the SDP there always exist two disjoint subsets of ``large'' size such that for any two points belonging to 
different subsets the squared Euclidean distance between them is at least $\Omega\left(\frac{1}{\sqrt{\log n}}\right)$.
The same idea is also used to get an improved approximation algorithm for {\sc Sparsest Cut} in \cite{ARV}.
Subsequently, this key idea has crucially been used in various other SDP based approximation algorithms and in 
solving problems related to metric embeddings.  

\section{Non-Convex Relaxation for $c$-{\sc Balanced Separator}}
Consider the following family of optimization problems which depend on a parameter $p \geq 0$.
This family is essentially an extension of the semi-definite program proposed by {\sf ARV}.
Throughout the paper we will use $\|.\|$ to represent the $l_2$ norm. Let us call this 
family of programs $F_{BS}^p$.

\begin{center}
$\displaystyle \min \frac{1}{2^p}\sum_{i,j \in E} \|v_i - v_j\| ^p $\\
$\displaystyle \|v_i\|^2 = 1 \quad \quad \forall i $\\
$\displaystyle \|v_i - v_j\|^p + \|v_j - v_k\| ^p \geq \|v_i - v_k\| ^p  \quad \quad \forall i,j,k$ \\
$\displaystyle \sum_{i,j \in E} \|v_i - v_j\| ^2 \geq 4c(1-c)n^2$
\end{center}

Note that for $p = 2$ this is the SDP relaxation used by {\sf ARV}. For $p = 1$,               
we are mapping the points onto a unit sphere, therefore
we do not have to force the additional triangle inequality constraint of $l_2$ metric. 
The same mapping described for $SDP_{BS}$ of the vertices of the graph
onto the unit sphere allows us to conclude that each program in this family 
is also a relaxation for {\sc $c$-Balanced Separator}. In most part of the 
paper we will be working with the case $p=1$. Now it is easy to see that if we can
solve this program for $p=1$, then simply using the randomized rounding algorithm 
of Goemans and Williamson \cite{WG95} will give an $O(1)$-approximation algorithm for the problem.
This is because of the fact that the last constraint ensures that a random hyperplane will 
find two sets of large size on both sides with constant probability \cite{ARV}. Another way to
look at it is that in this program we are actually embedding the points in an $l_2$ metric
which in turn is in $l_1$ metric.Therefore we have the following theorem:

\begin{theorem}
An efficient algorithm for solving $F_{BS}^p$ for $p=1$ implies an $O(1)$-factor approximation
algorithm for {\sc c-Balanced Separator}. 
\end{theorem}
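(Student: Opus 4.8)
The plan is to exhibit, for $p=1$, the standard chain of reductions: an exact (polynomial-time) solver for $F_{BS}^1$ produces an optimal set of unit vectors $\{v_i\}$ whose objective value $\frac{1}{2}\sum_{(i,j)\in E}\|v_i-v_j\|$ lower-bounds $\alpha_c(G)$, because $F_{BS}^1$ is a relaxation (this was already argued in the excerpt via the $\pm\mathbf{n}$ embedding); then round these vectors with the Goemans--Williamson random-hyperplane procedure of \cite{WG95} and argue that the resulting cut is (i) $c'$-balanced for some fixed $c'$ depending only on $c$, and (ii) of expected size $O(1)$ times the objective value, hence $O(1)\cdot\alpha_c(G)$. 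First I would set up notation: let $\mathrm{OPT}_{F}$ be the optimum of $F_{BS}^1$, fix an optimal solution $\{v_i\}_{i\in V}$ on the unit sphere $S^{d-1}$, pick a uniformly random hyperplane through the origin with normal $g$, and let $S=\{i: \langle g,v_i\rangle\ge 0\}$.

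The key steps, in order. Step 1 (objective bound): for unit vectors, $\Pr[g \text{ separates } i,j] = \theta_{ij}/\pi$ where $\theta_{ij}=\arccos\langle v_i,v_j\rangle$, while $\|v_i-v_j\| = 2\sin(\theta_{ij}/2)$; since $\theta_{ij} \le \frac{\pi}{2}\cdot 2\sin(\theta_{ij}/2)$ for $\theta_{ij}\in[0,\pi]$ (equivalently $\theta \le \pi \sin(\theta/2)$, which holds because $\sin$ is concave on $[0,\pi/2]$ and one checks the endpoints $0$ and $\pi$), we get $\mathbb{E}[|E(S,\bar S)|] = \sum_{(i,j)\in E}\theta_{ij}/\pi \le \frac12\sum_{(i,j)\in E}\|v_i-v_j\| = \mathrm{OPT}_F \le \alpha_c(G)$. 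Step 2 (balance): the constraint $\sum_{i,j\in E}\|v_i-v_j\|^2 \ge 4c(1-c)n^2$ — note it is the sum over \emph{all} pairs, matching the $SDP_{BS}$ spreading constraint — together with $\|v_i-v_j\|^2 = 2-2\langle v_i,v_j\rangle \le 4$ and $\Pr[i,j \text{ separated}] = \theta_{ij}/\pi \ge \frac{1}{\pi}\sin\theta_{ij} = \frac{1}{2\pi}\|v_i-v_j\|\sqrt{4-\|v_i-v_j\|^2}\,/\,?$ — here I would instead use the cleaner bound $\Pr[i,j\text{ separated}]=\theta_{ij}/\pi \ge \|v_i-v_j\|^2/(2\pi)$ (since $\theta \ge \sin^2(\theta/2)\cdot\frac{2}{\pi}\cdot\pi/?$; concretely $\theta/\pi \ge (1-\cos\theta)/ \pi \cdot$const, as $1-\cos\theta \le \theta$). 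Thus $\mathbb{E}[|S|\,|\bar S|] = \sum_{i<j}\Pr[i,j\text{ separated}] \ge \frac{1}{2\pi}\sum_{i<j}\|v_i-v_j\|^2 \ge \frac{2c(1-c)}{\pi}n^2$, and since $|S||\bar S| \le n^2/4$ always, a reverse-Markov / averaging argument (as in \cite{ARV}) shows that with constant probability $\min(|S|,|\bar S|) \ge c'n$ for a fixed $c'=c'(c)>0$. Step 3 (combine): condition on the constant-probability event that the cut is $c'$-balanced; on this event the conditional expectation of $|E(S,\bar S)|$ is still $O(\alpha_c(G))$, so by Markov there is a $c'$-balanced cut of size $O(\alpha_c(G))$, and repeating the rounding $O(1)$ times in expectation finds one; strictly this gives a \emph{pseudo}-approximation (a $c'$-balanced cut comparing against the optimal $c$-balanced cut), which is exactly the guarantee \cite{ARV} obtain and is the standard meaning of ``$O(1)$-factor approximation algorithm for $c$-{\sc Balanced Separator}'' in this context.

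The main obstacle is not any single inequality but reconciling the two roles the vectors must play simultaneously under a \emph{single} random hyperplane: the objective bound wants $\Pr[\text{separated}] \lesssim \|v_i-v_j\|$ (a first-power bound, tight when vectors are antipodal), while the balance bound wants $\Pr[\text{separated}] \gtrsim \|v_i-v_j\|^2$ (a second-power bound); both are true for unit vectors, but one must verify that the same sample $g$ can be argued to satisfy both a size lower bound and a cut-size upper bound with constant probability — this is handled by a union bound over the complementary bad events plus the reverse-Markov step for balance, and is the part that requires the careful constant-chasing of \cite{ARV}. A secondary subtlety is that solving $F_{BS}^1$ ``efficiently'' is the hypothesis, so I would state clearly that we assume a polynomial-time routine returning a feasible point whose objective is at most $\mathrm{OPT}_F + \epsilon$ for any $\epsilon>0$ (an additive error absorbed into the $O(1)$), and note that because $F_{BS}^1$ drops the $\ell_2$-triangle-inequality constraints there is no metric-rounding loss — the unit-sphere solution embeds isometrically (up to the $\sin$ vs.\ arc distortion quantified above) into $\ell_1$ via the hyperplane cut measure, which is precisely why the factor is $O(1)$ rather than $O(\sqrt{\log n})$.
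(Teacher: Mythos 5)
Your proposal is correct and follows exactly the route the paper intends: Goemans--Williamson hyperplane rounding of the optimal unit vectors, using $\Pr[\text{separated}]=\theta_{ij}/\pi\le\tfrac12\|v_i-v_j\|$ to bound the cut size by the objective and the spreading constraint $\sum_{i<j}\|v_i-v_j\|^2\ge 4c(1-c)n^2$ together with $\theta/\pi\ge(1-\cos\theta)/\pi$ and reverse Markov to get a $c'$-balanced cut with constant probability, as in \cite{ARV}. The paper states this only as a two-sentence sketch before the theorem, so your write-up simply supplies the details (and correctly flags the $\sum_{i,j\in E}$ vs.\ $\sum_{i<j}$ typo and the pseudo-approximation caveat); no substantive divergence.
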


\section{A Concave Programming Formulation}
In this section, we consider the family of optimization problems $F_{BS}^{p}$ proposed above 
and transform it into a concave program. This formulation allows us to use the algorithms which have been developed
to solve a concave program with arbitrary accuracy. 
We now write $F_{BS}^{p}$ as a program with variables as matrix entries and not as $d$-dimensional
vectors. The variables in the new program are of the form $x_{ij} = \left \langle v_i,v_j \right \rangle$. Since all $v_i$'s
are unit vectors we can write $\|v_i - v_j\|$ as $\sqrt{2 - 2\left \langle v_i,v_j \right \rangle}$.
If we consider the matrix $X$ with $ij^{th}$ entry as $x_{ij}$ use the transformation $z_{ij} = (1 - x_{ij})$, the new
optimization problem becomes:

\begin{center}
$\displaystyle \min \frac{1}{2^{p/2}} \sum_{i,j \in E} z_{ij}^{p/2} $\\
$\displaystyle z_{ij}^{p/2} + z_{jk}^{p/2} \geq z_{ik}^{p/2}  \quad \quad \forall i,j,k$ \\
$\displaystyle \sum_{i < j} z_{ij} \geq c(1-c)n^2$ \\
$\displaystyle z_{ii} = 0 \quad \quad \forall i $\\
$\displaystyle \textbf{1} - Z \succeq 0$
\end{center}
where \textbf{1} is the matrix with all entries as 1. \\
Let us call the above program $\tilde{F}_{BS}^{p}$.
This formulation allows us to prove the following lemma:

\begin{theorem}\label{concave}
$\tilde{F}_{BS}^{p}$ is a concave program for $0<p<2$. 
\end{theorem}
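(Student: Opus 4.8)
The plan is to check the two conditions in the definition of a concave program: that the objective $\frac{1}{2^{p/2}}\sum_{i,j\in E} z_{ij}^{p/2}$ is concave, and that the feasible region is convex. The objective is the easy half. For $0<p<2$ the exponent $p/2$ lies in $(0,1)$, so $t\mapsto t^{p/2}$ is concave on $[0,\infty)$; hence each summand $z_{ij}^{p/2}$ is concave as a function of the matrix $Z$ (it depends on one coordinate only), a sum of concave functions is concave, and scaling by the positive constant $2^{-p/2}$ preserves concavity. So $\tilde F_{BS}^p$ minimizes a concave function, and everything reduces to showing its feasible set is convex.

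Three of the four constraint families are immediately convex: $\sum_{i<j} z_{ij}\ge c(1-c)n^2$ is a halfspace, the equalities $z_{ii}=0$ cut out an affine subspace, and $\mathbf 1 - Z\succeq 0$ is the preimage of the PSD cone under an affine map. I would also note that this last constraint, on the diagonal-zero subspace, already forces $0\le z_{ij}\le 2$ for every $i,j$ (inspect the principal $2\times 2$ submatrix indexed by $\{i,j\}$), so the whole feasible set sits inside the nonnegative orthant, on which all the powers $z_{ij}^{p/2}$ are defined.

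The delicate family is the triangle-type constraints $z_{ij}^{p/2}+z_{jk}^{p/2}\ge z_{ik}^{p/2}$. As written these are not manifestly convex: the left-hand side is concave in $(z_{ij},z_{jk})$ whereas $-z_{ik}^{p/2}$ is convex, so the function whose superlevel set is being asked for is neither convex nor concave. The key step is to apply the increasing bijection $t\mapsto t^{2/p}$ of $[0,\infty)$ to both sides and rewrite the constraint equivalently as
\[
z_{ik}\ \le\ \bigl(z_{ij}^{p/2}+z_{jk}^{p/2}\bigr)^{2/p}.
\]
The right-hand side is the $\ell_{p/2}$ quasi-norm $\|(z_{ij},z_{jk})\|_{p/2}$, and for the exponent $p/2\in(0,1)$ this is a concave function on $\mathbb R^2_{\ge 0}$: it is positively homogeneous of degree $1$ and superadditive by the reverse Minkowski inequality ($\|x+y\|_q\ge\|x\|_q+\|y\|_q$ for $0<q<1$, $x,y\ge 0$), and homogeneity together with superadditivity force concavity on the cone. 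Hence $z\mapsto z_{ik}-\bigl(z_{ij}^{p/2}+z_{jk}^{p/2}\bigr)^{2/p}$ is convex on the nonnegative orthant, its $\le 0$ sublevel set is convex, and the feasible region of $\tilde F_{BS}^p$ is an intersection of convex sets, hence convex. Combined with concavity of the objective, this proves the theorem.

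The main obstacle — the step deserving the most care — is the concavity of $\bigl(a^{p/2}+b^{p/2}\bigr)^{2/p}$ on the nonnegative orthant, i.e. the reverse Minkowski inequality, which one can either invoke as classical or derive from the homogeneity-plus-superadditivity argument (or, less cleanly, from a Hessian computation). It is also worth remarking why the endpoints are excluded: at $p=2$ the exponent $p/2=1$, the powers are linear, the objective is merely affine, and reverse Minkowski degenerates to an equality, so $\tilde F_{BS}^2$ is just the ARV SDP rather than a genuinely concave program; and $p>0$ is needed for the substitution $z_{ij}=1-\langle v_i,v_j\rangle$ and the identity $\|v_i-v_j\|^p=2^{p/2}z_{ij}^{p/2}$ to behave as intended. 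So the range $0<p<2$ is exactly the regime in which the claim is both valid and non-trivial.
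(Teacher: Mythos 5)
Your proof is correct and follows the same overall route as the paper's: concavity of the objective from concavity of $t\mapsto t^{p/2}$ for $p/2\in(0,1)$, convexity of the halfspace, affine, and PSD constraints, and then the reformulation of each triangle constraint $z_{ij}^{p/2}+z_{jk}^{p/2}\ge z_{ik}^{p/2}$ as the hypograph condition $z_{ik}\le \bigl(z_{ij}^{p/2}+z_{jk}^{p/2}\bigr)^{2/p}$, reducing everything to concavity of $f(x,y)=\bigl(x^{1/q}+y^{1/q}\bigr)^{q}$ with $q=2/p>1$. The one place you genuinely diverge is in how that concavity is established: the paper computes the Hessian of $f$ explicitly and shows the associated quadratic form collapses to $-\frac{q-1}{q}\bigl(x^{1/q}+y^{1/q}\bigr)^{q}\,(\alpha y-\beta x)^2/(xy)^{(2q-1)/q}\le 0$, whereas you invoke degree-$1$ positive homogeneity together with superadditivity (the reverse Minkowski inequality for exponents in $(0,1)$) and observe that these two properties force concavity on the nonnegative cone. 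Your argument is cleaner and sidesteps the boundary issue the paper must acknowledge (its derivatives do not exist at $x=0$ or $y=0$, so the Hessian argument only covers the open orthant), at the cost of either citing reverse Minkowski as classical or proving it separately; the Hessian route is self-contained but messier. Your added remark that the constraint $\mathbf{1}-Z\succeq 0$ together with $z_{ii}=0$ already confines the feasible set to $0\le z_{ij}\le 2$, so that all the fractional powers are defined on the feasible region, is a useful point that the paper states only in the body of the text and does not use in its proof.
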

\begin{proof}
See Appendix.
\end{proof}

\section{Case $p = 1$}
In this case the our feasibility problem now looks like the following:

\begin{center}
$\displaystyle \min \frac{1}{\sqrt {2}} \sum_{i,j \in E} \sqrt{z_{ij}} $\\
$\displaystyle \sqrt{z_{ij}} + \sqrt{z_{jk}} \geq \sqrt{z_{ik}}  \quad \quad \forall i,j,k$ \\
$\displaystyle \sum_{i < j} z_{ij} \geq c(1-c)n^2$ \\
$\displaystyle z_{ii} = 0 \quad \quad \forall i $\\
$\displaystyle \textbf{1} - Z \succeq 0$
\end{center}
where \textbf{1} is the matrix with all entries as 1. Since $-1 \leq x_{ij} \leq 1$, $0 \leq z_{ij} \leq 2$. 
Let us denote the region in $\mathbb{R}^d$, which satisfies the last two constraints as $\cal P$,
the triangle inequality constraints as $\cal T$ and the ``well-separated'' constraint as $\cal H$. We 
will denote by $\cal F$ the feasible region.
Also if $C$ is an inequality constraint, then $C^{*}$ be the equality constraint corresponding to it.
The following is an easy observation which follows essentially from the definition.

\begin{observation}
${\cal P} \subseteq {\cal T}$.
\end{observation}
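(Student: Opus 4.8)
The plan is to unwind the definitions of the two regions and exhibit a direct pointwise containment. Recall that $\mathcal{P}$ is the set of symmetric matrices $Z=(z_{ij})$ with $z_{ii}=0$ for all $i$ and $\mathbf{1}-Z\succeq 0$, while $\mathcal{T}$ is the set of matrices whose entries satisfy $\sqrt{z_{ij}}+\sqrt{z_{jk}}\ge\sqrt{z_{ik}}$ for all triples $i,j,k$ (together with the implicit nonnegativity $z_{ij}\ge 0$ that makes the square roots meaningful). So the goal is: if $Z\in\mathcal{P}$, then the numbers $d_{ij}:=\sqrt{z_{ij}}$ form a (pseudo)metric, i.e. satisfy the triangle inequality.

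First I would note that for $Z\in\mathcal{P}$ we may write $z_{ij}=1-\langle v_i,v_j\rangle$ for unit vectors $v_1,\dots,v_n$: indeed $\mathbf{1}-Z\succeq 0$ is exactly the statement that the Gram matrix $X=\mathbf{1}-Z$ is positive semidefinite, and $z_{ii}=0$ forces $\langle v_i,v_i\rangle=1$, so such a vector realization exists (take $X=V^TV$). Then $z_{ij}=1-\langle v_i,v_j\rangle=\tfrac12\|v_i-v_j\|^2$, hence $\sqrt{z_{ij}}=\tfrac{1}{\sqrt 2}\|v_i-v_j\|$. Now the claimed inequality $\sqrt{z_{ij}}+\sqrt{z_{jk}}\ge\sqrt{z_{ik}}$ is literally $\tfrac{1}{\sqrt2}\bigl(\|v_i-v_j\|+\|v_j-v_k\|\bigr)\ge\tfrac{1}{\sqrt2}\|v_i-v_k\|$, which is just the ordinary triangle inequality for the Euclidean norm applied to $v_i-v_k=(v_i-v_j)+(v_j-v_k)$. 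This establishes $Z\in\mathcal{T}$, and since $Z\in\mathcal{P}$ was arbitrary, $\mathcal{P}\subseteq\mathcal{T}$.

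There is essentially no hard step here; the only mild subtlety is the bookkeeping that $\mathbf{1}-Z\succeq 0$ plus $z_{ii}=0$ genuinely yields a \emph{unit}-vector Gram representation (one should check the diagonal condition is what pins the norms to $1$), and that $\mathcal{T}$ as defined tacitly includes $z_{ij}\ge 0$ so that $\sqrt{z_{ij}}$ is real — this follows since $\|v_i-v_j\|^2\ge 0$. One could alternatively avoid vectors entirely: from $\mathbf 1 - Z\succeq 0$ and $z_{ii}=0$ one shows directly that $z$ is conditionally negative definite in the appropriate sense, and invoke the classical fact that $\sqrt{z_{ij}}$ is then an $\ell_2$-embeddable metric (Schoenberg); but the vector argument above is the most transparent route and is exactly the construction already used in the paper to show $F_{BS}^p$ is a relaxation. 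So I expect the ``main obstacle'' to be purely expository: stating cleanly which implicit constraints are bundled into $\mathcal{P}$ and $\mathcal{T}$, after which the observation is immediate.
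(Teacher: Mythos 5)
Your proof is correct and is exactly the intended argument: the paper gives no explicit proof (calling the containment an easy consequence of the definitions), but your Gram-vector realization of $\mathbf{1}-Z\succeq 0$ with $z_{ii}=0$ and the resulting identity $\sqrt{z_{ij}}=\tfrac{1}{\sqrt2}\|v_i-v_j\|$ is precisely the reasoning the paper itself invokes in Section 3 when it notes that for $p=1$ the points lie on a unit sphere and the triangle inequality comes for free. Nothing is missing.
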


\subsection{The 3-Dimensional Intuition}

\begin{figure}[h!]\label{3D}
 \begin{center}
  \psfrag{p1}{$p_1$} \psfrag{p2}{$p_2$} \psfrag{p3}{$p_3$} \psfrag{p4}{$p_4$} \psfrag{p5}{$p_5$} 
   \psfrag{p6}{$p_6$} \psfrag{p7}{$p_7$}
  \includegraphics[scale = 0.7]{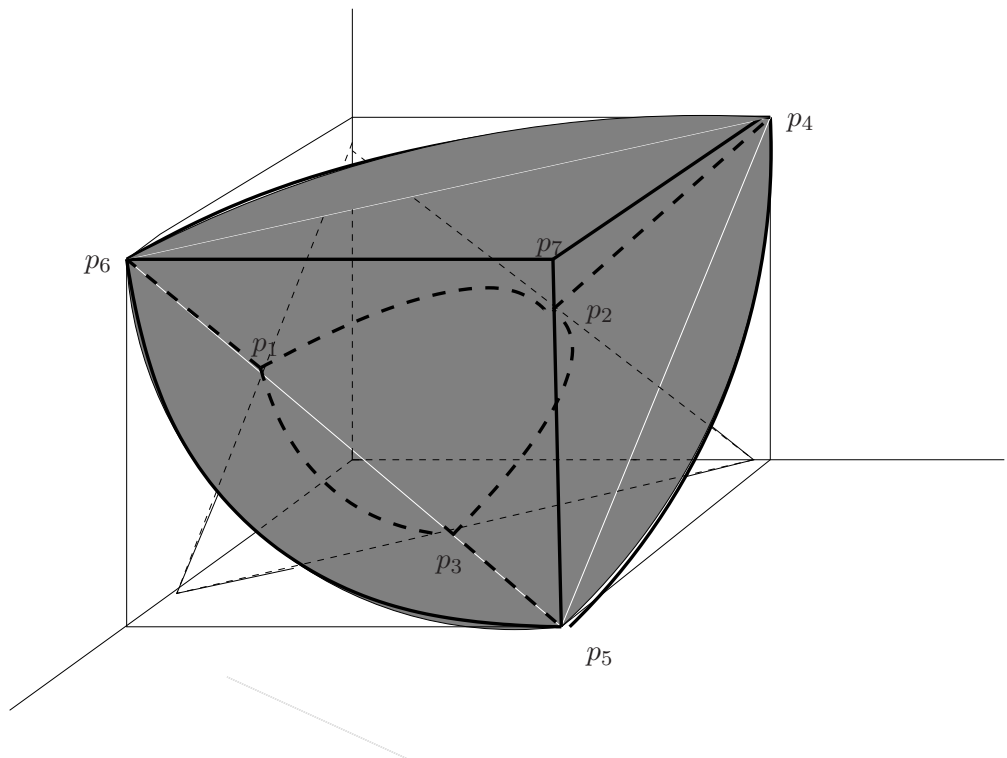} 
  \caption{The picture in 3-D}
 \end{center}
 \end{figure} 

If we just focus our attention to 3-variables and look at the feasible set (with out the positive semi-definite constraint),
then because of the nature of the triangle inequality constraints 
the geometry of the feasible set looks as shown in the Figure \ref{3D}. Notice that the feasible set is not
polyhedral but it has one dimensional line segments on its boundary. The shaded region enclosed by the 
points $p_1,p_2,p_3,p_4,p_5, p_6$ and $p_7$, depicts the feasible set. The line segments which are on the boundary of this object are
$p_1p_6$, $p_3p_5$, $p_2p_4 $. $p_6p_7$, $p_7p_4$ and $p_5p_7$. Also $p_1p_2$, $p_2p_3$ and $p_3p_1$ 
are non-linear arcs which are on the boundary of the feasible set.
 Given this description the following is easy to show. 

\begin{lemma}
Let $f = x_1^{p} + x_2^{p} + x_3^{p} $ where $p<1/2$ be the objective function to be minimized over
$\cal F$, then the optimum is achieved at one of extreme points $p_1$, $p_2$ or $p_3$.
\end{lemma}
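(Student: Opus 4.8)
The plan is to exploit Fact 1 together with the explicit description of the boundary of $\cal F$ given in the figure. Since $f = x_1^p + x_2^p + x_3^p$ with $p < 1/2$ is a sum of strictly concave functions of a single variable (for $0 < p < 1$ the map $t \mapsto t^p$ is concave on $[0,\infty)$), $f$ is concave on the convex region $\cal F$. By Fact 1, the minimum of $f$ over $\cal F$ is attained at an extreme point of $\cal F$. So the first step is to identify all extreme points of $\cal F$: from the figure these are exactly $p_1, p_2, p_3$ (the endpoints of the non-linear arcs) together with $p_4, p_5, p_6, p_7$, which are the endpoints of the line segments $p_1p_6$, $p_3p_5$, $p_2p_4$, $p_6p_7$, $p_7p_4$, $p_5p_7$. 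Interior points of a segment are not extreme, and points on the relative interior of the curved arcs $p_1p_2$, $p_2p_3$, $p_3p_1$ also fail to be extreme only if those arcs are convex-shaped appropriately — I will need to confirm that the arcs bound $\cal F$ from the "outside" so that no point strictly inside an arc is a convex combination blocker; more precisely, I claim every extreme point of $\cal F$ lies in $\{p_1,\dots,p_7\}$, and this follows from the stated geometry.

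The second step is to rule out $p_4, p_5, p_6, p_7$ as minimizers, so that only $p_1, p_2, p_3$ remain. The natural way is a direct comparison of objective values, using coordinates. I would read off from the triangle-inequality geometry that $p_1, p_2, p_3$ are the three points where two of the $\sqrt{z_{ij}}$-triangle inequalities are tight in the "degenerate" way that forces one coordinate to $0$ — e.g. $p_1 = (0, a, a)$ up to permutation, for the appropriate value $a$ determined by the separation constraint $\sum_{i<j} z_{ij} \geq c(1-c)n^2$ being tight — while $p_7$ is the fully symmetric point and $p_4, p_5, p_6$ are intermediate. Then, because $t \mapsto t^p$ is concave and increasing with $t^p \to 0$ fastest near $0$ for small $p$, pushing mass of the coordinate vector toward having a zero entry (i.e. toward $p_1, p_2, p_3$) decreases $f$. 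Concretely: among all feasible points, the ones minimizing a concave separable objective concentrate the "budget" $c(1-c)n^2$ into as few coordinates as the constraints permit, and the triangle inequalities in $\sqrt{\cdot}$ form permit exactly one coordinate to vanish (not two, since that would collapse too much). This identifies $p_1, p_2, p_3$.

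The main obstacle I anticipate is making the "only $p_1,\dots,p_7$ are extreme points" claim rigorous without the picture — in particular, verifying that no point on the open curved arcs $p_1p_2, p_2p_3, p_3p_1$ is an extreme point of $\cal F$. An arc point is non-extreme iff it lies strictly between two other feasible points, which requires the region to be locally on one side of the arc in a way that leaves a chord inside $\cal F$; this is exactly the statement that the curved part of the boundary is "concave toward the interior," and I would check it by looking at the Hessian of the active constraint $\sqrt{z_{ij}} + \sqrt{z_{jk}} = \sqrt{z_{ik}}$, i.e. showing the set it bounds is convex on the relevant side (which is consistent with $\cal F$ being convex, as asserted, but the arc could still carry extreme points if it is strictly curved the "wrong" way — it is not, since $\sqrt{\cdot}$ is concave). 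Once that is settled, the finite comparison over the seven named points is routine arithmetic, and the minimum falls at $p_1, p_2,$ or $p_3$ by the concavity/small-$p$ argument above. A secondary subtlety is whether the positive-semidefinite constraint $\mathbf{1} - Z \succeq 0$ excludes any of $p_1, p_2, p_3$; in three variables this must be checked but does not change the conclusion, since the relevant points correspond to genuine vector configurations on the sphere.
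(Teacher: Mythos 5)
The paper offers no proof of this lemma (it is stated after the figure with only the remark that it is ``easy to show''), so the only question is whether your argument stands on its own. It does not: the central step --- that every extreme point of $\mathcal{F}$ lies in the finite set $\{p_1,\dots,p_7\}$ --- is false, and you resolve your own flagged worry about the curved arcs in the wrong direction. For a convex body, a strictly curved piece of the boundary consists \emph{entirely} of extreme points (every point of a circle is an extreme point of the disk; every point of the circle where a plane cuts a ball is an extreme point of the resulting cap). Here a point $x$ in the relative interior of the arc $p_1p_2$ has as its active constraints the hyperplane $\mathcal{H}^*$ and one triangle-inequality surface; if $x$ were the midpoint of a segment in $\mathcal{F}$, convexity of each active constraint set forces that entire segment to lie in the intersection of the two active surfaces, i.e.\ in the arc itself --- impossible, since the arc is non-linear. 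So the three curved arcs are a continuum of extreme points, Fact 1 does not reduce the problem to a finite comparison over seven points, and your Step 2 never addresses the candidates that actually matter. (By contrast, points in the relative interior of the curved two-dimensional faces are correctly non-extreme, because the surfaces $\sqrt{z_{ij}}+\sqrt{z_{jk}}=\sqrt{z_{ik}}$ are cones and hence ruled.)

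A related symptom is that your argument only ever uses $p<1$ (concavity of $t\mapsto t^p$), whereas the lemma assumes $p<1/2$. The missing piece is precisely an argument that $f$ restricted to each curved arc attains its minimum at the arc's endpoints $p_i$; this is where the exponent threshold should enter (e.g.\ after the substitution $u_{ij}=\sqrt{z_{ij}}$ the arc becomes the intersection of a plane with a sphere and the objective becomes $\sum u_{ij}^{2p}$, which one must show is minimized on the arc at the sparse points where one coordinate vanishes). Your closing heuristic --- ``concentrate the budget into as few coordinates as the constraints permit'' --- is the right intuition but is not a proof, and the finite arithmetic comparison among $p_1,\dots,p_7$ that you defer to is not the hard part of the lemma. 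To repair the proof you need (i) the correct inventory of extreme points, including the arcs, and (ii) a one-dimensional optimization argument along each arc showing the minimum sits at an endpoint when $p<1/2$.
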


\section{Combinatorial Geometry of the Feasible Set}
In this section we separately consider the constraints and develop tools
to understand the geometry of the feasible set which can potentially help us 
in getting an efficient algorithm to solve the feasibility problem. \\
Our aim in the sections to follow is to give a tight characterization of the 
``vertices''  of the proposed program which are defined
as follows:

\begin{definition}[Vertex] 
A point $p \in \mathbb{R}^d$ is called a vertex-set of the feasible set $\cal F$ if $p \in {\cal F}$ and there exists 
equality constraints $C_{1}^{*}, C_{2}^{*}, \dots C_{r}^{*} $ such that $p = \bigcap_k^{r} C_{k}^{*}$.
\end{definition}
 
\begin{definition}[Arc] 
 An arc $a$ of $\cal F$ is a closed one-dimensional curve joining two vertices of $\cal F$ such that there exists equality
 constraints $C_1^{*}, C_2^{*}, \dots C_r^{*} $ such that $a = \bigcap_i^{r} C_i^{*} $

\end{definition}
Notice that we can analogously define vertices and arcs corresponding to the 
regions ${\cal P}, {\cal H}$ and ${\cal T}$.
From the previous section it is clear that some arcs are line segments while others are not. The arcs 
which are line segments we will call them \emph {edges}. In the subsequent sections we will consider the
constraints separately.
 
\subsection{The Triangle Inequality Constraints}
We will now look at the geometric structure posed by the Triangle Inequality 
Constraints inside the $[0-2]^d$ hypercube and prove some structural results on those. 

\begin{definition}
Let $\cal R$ denote the region inside $[0-2]^d$ that is formed by the intersections of the
constraints $z_{ij} + z_{jk} \geq z_{kl}$, for all $i,j,k$.
\end{definition}

\begin{observation}
The $0$ vector is a vertex of $\cal T$.
\end{observation}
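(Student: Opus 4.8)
The plan is to show that the zero vector $\mathbf{0} \in \mathbb{R}^d$ (corresponding to $z_{ij}=0$ for all pairs) is a vertex of $\mathcal{T}$ by exhibiting enough equality constraints whose common intersection is exactly the single point $\mathbf{0}$. Recall a vertex-set is a point $p \in \mathcal{F}$ (here $p \in \mathcal{T}$) that equals $\bigcap_k C_k^*$ for some collection of equality constraints. So first I would verify feasibility: at $\mathbf{0}$ every triangle inequality $\sqrt{z_{ij}} + \sqrt{z_{jk}} \geq \sqrt{z_{ik}}$ reads $0 + 0 \geq 0$, which holds; hence $\mathbf{0} \in \mathcal{T}$ (and in fact $\mathbf{0} \in \mathcal{P}$ as well, so by the Observation $\mathbf{0} \in \mathcal{T}$). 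Note that we are working only with the region $\mathcal{T}$ cut out by the triangle inequalities inside $[0,2]^d$, so the ``well-separated'' constraint $\mathcal{H}$ is irrelevant here.

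Next I would identify the active equality constraints. The key point is that the triangle inequality $\sqrt{z_{ij}} + \sqrt{z_{jk}} \geq \sqrt{z_{ik}}$, when two of the three indices coincide — say $i=k$ — degenerates: together with $z_{ii}=0$ we get $\sqrt{z_{ij}} + \sqrt{z_{ji}} \geq \sqrt{z_{ii}} = 0$, which is the trivial one, but the reverse pairing gives $\sqrt{z_{ii}} + \sqrt{z_{ij}} \geq \sqrt{z_{ij}}$; neither forces anything. The better route is: for a fixed pair $\{i,j\}$ and any third index $k$, consider the triangle constraint $\sqrt{z_{ik}} + \sqrt{z_{kj}} \geq \sqrt{z_{ij}}$ — not directly useful — versus $\sqrt{z_{ij}} + \sqrt{z_{jk}} \geq \sqrt{z_{ik}}$. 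The cleanest argument is simply that $\mathbf{0}$ is the unique point of the cube $[0,2]^d$ with all coordinates zero, and each coordinate hyperplane $z_{ij} = 0$ is an equality constraint $C^*$ available to us: it is the equality version of the box constraint $z_{ij} \geq 0$ (which is part of $\mathcal{P}$, equivalently derivable since $z_{ij} = 1 - x_{ij}$ and $\mathbf{1} - Z \succeq 0$ forces $z_{ij} \in [0,2]$; but within $\mathcal{T}$ alone one uses $z_{ij} \geq 0$ as the standing domain constraint). Taking $C_{ij}^*: z_{ij} = 0$ for all $\binom{n}{2}$ pairs, their intersection is exactly $\{\mathbf{0}\}$, which is a single point, hence $\mathbf{0}$ is a vertex of $\mathcal{T}$.

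The main obstacle — really a matter of bookkeeping rather than depth — is making precise which inequalities count as ``constraints of $\mathcal{T}$'' in the sense of the vertex definition: the definition quantifies over equality constraints $C^*$, and one must be careful that the nonnegativity box constraints $z_{ij} \geq 0$ are legitimately among them (they are, since $0 \le z_{ij} \le 2$ is explicitly recorded in the $p=1$ formulation), and that we are allowed to intersect more than $d$ of them (the definition places no upper bound on $r$). Once that is granted, the proof is immediate: $\bigcap_{i<j} \{z_{ij} = 0\} = \{\mathbf{0}\}$ and $\mathbf{0} \in \mathcal{T}$, so $\mathbf{0}$ is a vertex. If instead one wants to use only genuine triangle-inequality facets, I would argue that at $\mathbf{0}$ a maximal set of triangle inequalities is tight (each holds with equality $0=0$ when the relevant $z$'s vanish) and that the tangent cone at $\mathbf{0}$ is pointed with apex $\mathbf{0}$, forcing it to be an extreme point; but the coordinate-hyperplane argument above is shorter and suffices.
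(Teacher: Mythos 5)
The paper states this observation without proof, so there is nothing to compare against line by line; judged on its own, your argument has the right shape but certifies the wrong region. The definition of vertex requires $p=\bigcap_k C_k^{*}$ where the $C_k$ are constraints of the region in question, and the observation is specifically about $\cal T$, which the paper defines as the region cut out by the triangle inequalities alone --- the bounds $0\le z_{ij}\le 2$ are derived from the PSD constraint and are attached to $\cal P$ (and to $\cal R$, which is explicitly intersected with $[0,2]^d$), not to $\cal T$. So taking $C_{ij}^{*}:z_{ij}=0$ as ``the equality version of the box constraint'' certifies $\mathbf{0}$ as a vertex of $\cal R$ (or of $\cal F$), but not, under the paper's bookkeeping, of $\cal T$. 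Your fallback for using only triangle facets is also not quite aligned with the definition: pointedness of the tangent cone shows $\mathbf{0}$ is an extreme point, but the paper's notion of vertex asks for the intersection of equality constraints to be exactly $\{p\}$, which is a different (stronger, combinatorial) condition.

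The repair is short and stays entirely inside $\cal T$. For any unordered triple $\{i,j,k\}$ of distinct indices, the family ``for all $i,j,k$'' supplies all three constraints $\sqrt{z_{ij}}+\sqrt{z_{jk}}\ge\sqrt{z_{ik}}$, $\sqrt{z_{jk}}+\sqrt{z_{ik}}\ge\sqrt{z_{ij}}$, $\sqrt{z_{ik}}+\sqrt{z_{ij}}\ge\sqrt{z_{jk}}$. Impose all three as equalities: adding any two of them and cancelling gives $\sqrt{z_{ij}}=\sqrt{z_{jk}}=\sqrt{z_{ik}}=0$. Since $n\ge 3$, every pair $\{i,j\}$ lies in some triple, so the intersection of all triangle equality constraints is exactly $\{\mathbf{0}\}$, and $\mathbf{0}\in\cal T$ since every inequality reads $0\ge 0$ there. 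This uses only constraints that define $\cal T$ and matches the definition of vertex verbatim; your coordinate-hyperplane argument is fine as a proof that $\mathbf{0}$ is a vertex of $\cal R$ and, via Observation 3 of the paper, could be routed back to $\cal T$, but the direct argument is cleaner.
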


\begin{observation}
$e$(p) is an edge(vertex) of $\cal T$ iff it is an edge(vertex) of $\cal R$.
\end{observation}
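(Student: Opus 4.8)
The plan is to show that the only difference between ${\cal R}$ and ${\cal T}$ is how the triangle inequalities are written down, and that at a vertex or along an edge this difference evaporates because the active triangle constraints are forced into a degenerate form. First I would record that ${\cal R}\subseteq{\cal T}$: since all $z_{ij}\ge 0$, the linear inequality $z_{ij}+z_{jk}\ge z_{ik}$ gives $(\sqrt{z_{ij}}+\sqrt{z_{jk}})^2=z_{ij}+z_{jk}+2\sqrt{z_{ij}z_{jk}}\ge z_{ik}=(\sqrt{z_{ik}})^2$, hence $\sqrt{z_{ij}}+\sqrt{z_{jk}}\ge\sqrt{z_{ik}}$; and the box constraints $z_{ii}=0$, $0\le z_{ij}\le 2$ are literally the same in the two descriptions, so every box equality constraint $C^{*}$ is common to ${\cal R}$ and ${\cal T}$. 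I would also note the elementary fact that on the coordinate hyperplane $\{z_{ij}=0\}$ the $(i,j,k)$-triangle constraint with apex $j$ collapses, in either version, to the single linear equality $z_{jk}=z_{ik}$, and similarly on $\{z_{jk}=0\}$. Thus wherever a coordinate appearing as a ``short side'' of a tight triangle constraint is itself pinned to $0$, the linear and square-root forms of that constraint are interchangeable.

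The heart of the argument is then the following lemma: if $p$ is a vertex of ${\cal R}$ (resp. ${\cal T}$), or lies in the relative interior of an edge of ${\cal R}$ (resp. ${\cal T}$), and the $(i,j,k)$-triangle constraint with apex $j$ is tight at $p$, then $z_{ij}(p)=0$ or $z_{jk}(p)=0$. For ${\cal R}$ I would prove this by describing the vertices of the metric-in-a-box polytope $\{z\in[0,2]^d : z$ satisfies all the linear triangle inequalities$\}$; the expectation is that they are exactly the rescaled partition metrics, i.e. $z_{ij}\in\{0,2\}$ with $z_{ij}=0$ precisely when $i,j$ lie in a common block of a partition of $[n]$, so that a tight flat triangle with all three values in $\{0,2\}$ cannot have both short sides equal to $2$ and must have one of them $0$; the edge case then follows because an edge of ${\cal R}$ is the segment joining two such vertices and a triangle that is flat all along the segment is flat at both endpoints, which pins the same short side to $0$ throughout. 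For ${\cal T}$ I would instead use that, away from the hyperplanes $\{z_{ij}=0\}$ and $\{z_{jk}=0\}$, the surface $\{\sqrt{z_{ij}}+\sqrt{z_{jk}}=\sqrt{z_{ik}}\}$ is the graph of a concave function whose Hessian is singular only in the radial direction, so that this surface contains no line segment except sub-segments of rays through the origin; a short count of the rank of the Jacobian of the active constraints at $p$ then shows that such a segment can neither be an edge of ${\cal T}$ nor cut ${\cal T}$ down to a single point unless $\{z_{ij}=0\}$ or $\{z_{jk}=0\}$ is also active.

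Granting the lemma, the two implications are the same bookkeeping. Let $p$ be a vertex of ${\cal R}$, with defining equality system $C_1^{*},\dots,C_r^{*}$ consisting of box equalities and tight flat triangles. For each tight triangle, say $z_{ij}+z_{jk}=z_{ik}$ with $z_{ij}(p)=0$, the box equality $z_{ij}=0$ is also among the active constraints, and by the elementary fact above we may replace the triangle equality by $\sqrt{z_{ij}}+\sqrt{z_{jk}}=\sqrt{z_{ik}}$ without changing the common solution set. After all such replacements we obtain an equality system of ${\cal T}$ whose unique common solution is $p$; since $p\in{\cal R}\subseteq{\cal T}$, $p$ is a vertex of ${\cal T}$. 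Running the substitution in the reverse direction --- again legitimate because whenever a square-root triangle equality is active at $p$ the lemma forces $z_{ij}=0$ or $z_{jk}=0$ to be active too --- gives the converse, and repeating the whole argument verbatim with ``unique common solution $\{p\}$'' replaced by ``common solution set of dimension one'' proves the statement for edges. The only step that is not routine is the lemma, and within it the combinatorial description of the vertices (and then the edges) of ${\cal R}$; I expect that to be the real obstacle, everything else reducing to the inequality $(\sqrt a+\sqrt b)^2\ge a+b$ and the translation of one equality system into the other.
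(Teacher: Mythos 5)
The paper offers no argument for this observation --- it is asserted as self-evident immediately after the definition of $\cal R$ --- so there is no proof of record to compare yours against. Your reduction is the natural one: ${\cal R}\subseteq{\cal T}$ is correct, the interchangeability of $z_{ij}+z_{jk}\ge z_{ik}$ and $\sqrt{z_{ij}}+\sqrt{z_{jk}}\ge\sqrt{z_{ik}}$ on the hyperplane $\{z_{ij}=0\}$ is correct, and you correctly isolate the entire difficulty in the lemma that every triangle constraint tight at a vertex (or along an edge) must have a short side pinned to $0$.

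That lemma is false for $n\ge 5$, and with it the observation itself. Take the path metric of $K_{2,3}$ on $\{a_1,a_2,b_1,b_2,b_3\}$, i.e.\ $z_{a_1a_2}=z_{b_ib_j}=2$ and $z_{a_kb_j}=1$. The nine tight linear triangle equalities $z_{a_1b_j}+z_{b_ja_2}=z_{a_1a_2}$ and $z_{a_kb_i}+z_{a_kb_j}=z_{b_ib_j}$ have rank $9$ in the $10$ coordinates (they force every coordinate indexed by $a_1a_2$ or $b_ib_j$ to equal a common value $x$ and every coordinate $a_kb_j$ to equal $x/2$), so together with the box equality $z_{a_1a_2}=2$ they cut out exactly this point: it is a vertex of $\cal R$ in the paper's sense, with tight triangles both of whose short sides equal $1\neq 0$. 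This is the standard witness that the metric cone strictly contains the cut cone for $n\ge 5$, and it also refutes your anticipated description of the vertices of $\cal R$ as rescaled partition metrics --- a description that coincides with the paper's own Lemma 3, which fails on the same example. Worse for the observation itself: at this point no square-root triangle inequality is tight ($\sqrt{1}+\sqrt{1}=2>\sqrt{2}$), so the only equality constraints of $\cal T$ satisfied there are the four box equalities $z=2$, whose common intersection is $6$-dimensional; hence the point is a vertex of $\cal R$ but not of $\cal T$. Your substitution argument is sound where its hypothesis holds (e.g.\ for $n\le 4$, where the vertices of $\cal R$ really are the $\{0,2\}$-valued partition metrics and every tight triangle has a zero short side), but as stated the key lemma, and therefore the observation, cannot be proved because neither is true in general.
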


We now take a deviation and first characterize all the symmetric $n \times n$ 
matrices with $0,1$ entries and main diagonal as $0$ which satisfy the 
triangle inequality constraints. Since such matrices represent the adjacency 
matrix of some graph, we essentially need to characterize all graphs whose corresponding 
matrices satisfy the triangle inequality constraints. As we will see later
this characterization will be helpful in analyzing the geometry of the triangle 
inequality constraints. But before that, we first need 
to define the following class of graphs:\\

\begin{definition}
A graph $G (V,E)$ is called \textbf{partial-clique} if there exists pairwise disjoint sets 
$S_1,S_2, \dots S_r \subseteq V$, such that $G = K_V \setminus \{\bigcup_{i=1}^r K_{S_i} \}$ where
$K_S$ denotes edges in the complete graph on $S \subseteq V$.
\end{definition}

We also define the following which will be of our interest later and subsequently
prove a series of combinatorial results based on these.

\begin{definition}
A partial-clique $G = K_V \setminus \{\bigcup_{i=1}^r K_{S_i} \} $ over a vertex set $V$ 
is called a \textbf{multi-clique} if $\bigcup_{i=1}^r S_i = V$.
\end{definition}
(In the literature these graphs are popularly known as multi-partite graphs.)
For the sake of brevity we will assume that the empty graph ($V = \emptyset$) 
is a multi-clique. We will also allow, again for the sake of simplification,
that one can take $S_i$'s of cardinality 1. 

\begin{theorem} \label{partialclique}
A graph satisfies the triangle inequality constraints if and only if it is a partial-clique. 
\end{theorem}
\begin{proof}
It is easy to see that if the graph is a partial clique then it satisfies the 
triangle inequality constraints. Let $S_1, S_2, \dots, S_r$ be the corresponding set 
of subsets. Consider any three vertices $v_i,v_j,v_k$, the following cases may arise: 
(i) none of of them lies in a subset $S$, (ii) all three lie in distinct $S_{i_1}, S_{i_2}, S_{i_3}$
(iii) two of them lie in a set $S_{i_1}$ and one lie in $S_{i_2}$  (iv) only one of them 
lies in a subset $S_{i_1}$ (v) all lie in the same subset $S_i$. In all these cases one can easily 
verify that the triangle inequality holds. Hence a partial-clique satisfies the triangle inequality
constraints. \\
For the converse part consider a graph that satisfies the triangle inequality constraints and assume
that it is not a partial-clique. Clearly the graph can't be disconnected because in that case one can 
easily find three vertices violating the corresponding triangle inequality. The triangle inequality 
essentially says that for all $i,j,k$, if two edges $v_iv_j$ and $v_jv_k$ are not present in the graph 
then the edge $v_iv_k$ should also not be present. 
Now assume that the complement of the graph has $m$ connected components.
Consider any one component say $H$. Let $S_1 \subseteq V$ be the set of vertices 
in the component $H$ with an edge $v_iv_j \in H$. If $H$ has just these two vertices 
$v_{i_1}$ and $v_{i_2}$ then it is already a clique. Let $v_{i_3}$ be a vertex which share an 
edge with $v_{i_1}$ or $v_{i_2}$. W.l.g let it be $v_{i_1}$.  Since $v_{i_1}v_{i_2}$ and $v_{i_1}v_{i_2}$ 
is in the complement, $v_{i_3}v_{i_2}$ also have to be in the complement, which forms a 3-clique. 
We can repeat the same argument for the next vertex $v_{i_4}$, which will share an edge with at least
one vertex in $\{v_{i_1},v_{i_2},v_{i_3} \}$, to show that it forms a 4-clique. Repeating this
argument for all the vertices of the component we can show that $H$ is a clique. The same 
holds for all the components.  
\end{proof}

\begin{lemma}\label{triangleedge}
The edges of $\cal R$ are of the form $\lambda B$ where $B$ is a bi-clique 
on $V = \{1,2, \dots, n\}$.
\end{lemma}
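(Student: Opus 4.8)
The plan is to combine the combinatorial characterization of Theorem~\ref{partialclique} with a dimension-counting argument. First I would recall that an edge $e$ of $\cal R$ is, by definition, a one-dimensional face, hence of the form $e = \bigcap_k C_k^{*}$ for some collection of triangle-inequality equalities $z_{ij}^{*} + z_{jk}^{*} = z_{ik}^{*}$ (together with some facet equalities $z_{ij} = 0$ coming from the hypercube boundary), and the solution set of this system is a line segment. Since $\cal R$ is a cone (it is closed under nonnegative scaling: if $z$ satisfies all constraints $z_{ij} + z_{jk} \ge z_{ik}$ then so does $\lambda z$), any edge incident to the origin is a ray, and — using the previous Observation that the $0$ vector is a vertex of $\cal T$ (equivalently of $\cal R$) — every bounded edge is a segment $[0, \lambda B]$ for a single direction $B$; so it suffices to identify which directions $B$ arise, i.e. which rays $\{\lambda B : \lambda \ge 0\}$ are edges.

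The key step is to show that a direction $B$ spanning an edge must be (a scalar multiple of) the $0/1$ incidence vector of a complete bipartite graph on $V = \{1,\dots,n\}$. For this I would argue as follows. Along a ray $\lambda B$, the triangle inequalities that hold with equality are exactly those that hold with equality for $B$ itself, and for the ray to be one-dimensional the active constraints must pin down all coordinates up to a single scaling. I would first observe that on an edge the entries of $B$ can take at most two distinct nonzero values — if three distinct positive values $a<b<c$ appeared, the constraint system could not be tight enough to force collinearity; more carefully, the active equalities among $\sqrt{\cdot}$-free triangle constraints $b_{ij}+b_{jk} = b_{ik}$ split the support into "distance classes," and an extreme ray forces this to degenerate to a single nonzero value. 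Rescaling, $B$ is then the $0/1$ adjacency matrix of some graph, which by Theorem~\ref{partialclique} must be a partial-clique, $B = K_V \setminus \bigcup_i K_{S_i}$. Finally I would rule out $r \ge 3$ and the case where some $S_i$ fails to cover $V$: if the complement has three or more parts, or leaves an isolated vertex outside all $S_i$, then the adjacency vector lies in the interior of a higher-dimensional face (one can exhibit two linearly independent partial-clique directions whose average is $B$, e.g. by merging two of the parts), contradicting one-dimensionality. Hence $r = 2$ and $S_1 \cup S_2 = V$, i.e. $B$ is a bi-clique, which is the claim.

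The converse direction — that every bi-clique direction $\lambda B$ actually \emph{is} an edge (not just contained in one) — I would handle by writing down the active constraint set explicitly for a bipartition $V = S_1 \sqcup S_2$: all $z_{ij}$ with $i,j \in S_1$ or $i,j \in S_2$ are $0$, and the triangle inequalities $z_{ij} + z_{jk} = z_{ik}$ are tight whenever exactly the "right" two of the three pairs cross the bipartition; one then checks this linear system has rank $\binom{n}{2} - 1$, leaving a one-dimensional solution set, namely $\{\lambda B\}$.

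The main obstacle I expect is the "at most one nonzero value" step: controlling which triangle equalities can simultaneously be active along an extreme ray, and converting that into the statement that the direction must be $0/1$ up to scaling. The earlier graph-theoretic characterization (Theorem~\ref{partialclique}) does the heavy lifting \emph{once} we are at a $0/1$ point, but promoting a general extreme ray to a $0/1$ point, and then pinning $r=2$ with full covering, is where the genuine combinatorial-geometric work lies.
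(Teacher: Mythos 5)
Your proposal follows essentially the same route as the paper's proof: reduce an edge to a direction whose coordinates take a single nonzero value $\lambda$, invoke Theorem~\ref{partialclique} to identify the support with a partial-clique, and then rule out partial-cliques that are not bi-cliques (the paper does this by a case analysis on which triangle-equality hyperplanes could force the required coordinate equalities, while you propose the dual argument of exhibiting two independent directions averaging to $B$; these are two views of the same fact). The one step that both you and the paper leave essentially unargued is the reduction to a single nonzero value --- the paper simply asserts it, and you candidly flag it as the main obstacle --- so your sketch is at the same level of rigor as, and structurally parallel to, the published proof.
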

\begin{proof}
 One way is easy to verify. For the other side,
Notice that if the intersection of a set of equalities actually results
into a line then there will be a set of variables $x_{ij}$ such that all of them are
equal and rest are all zeros (hence there is just one variable). Therefore
the line will actually be a vector with some entries as $\lambda$ and rest as 0.
Since the 0/1 vectors which satisfy the triangle inequalities are the partial-cliques
such a vector with pass through a partial clique $G_v$. Consider the graph $G_{\lambda}v$
which represents a weighted partial-clique with all edges with weight $\lambda$. 
Now if such a partial clique is not a bi-clique then either one of the two cases are
possible: 
1. One can find an edge $\{i,j\}$ such that weight of ${\{i,j\}}$ is $\lambda$ and there is no
pair of the form $\{j,k\}$ or $\{i,k'\}$ such that $w_{\{j,k\}} = 0$ or $w_{\{i,k'\}} = 0$. \\
2. $G_\lambda$ is a multi-clique with $\lambda$ as edge weights. \\

If it is the first case let $\{i,j\}$ and $\{i_1,j_1\}$ be the two pairs which have weight 
$\lambda$ and since both have value $\lambda$, the intersection 
of the planes which we have chosen implies $z_{ij} = z_{i_1j_1}$. This means there must be some 
be some $k$ such that one of $z_{ij} + z_{jk} = z_{ik}, z_{ij} + z_{ik} = z_{jk} $ and
$z_{ik} + z_{jk} = z_{ij}$ is chosen and $z_{jk} = z_{i_1j_1}$ or $z_{ik} = z_{i_1j_1}$ is implied
by the rest of the planes chosen. But in both these cases, one of $z_{ik}$ or $z_{jk}$ is zero 
which is a contradiction.  \\
For the second case, let $i,j,k$ be three vertices lying in sets $V_1,V_2$ and $V_3$,
whose cliques are removed, respectively. Now due to the way equality of two 
variables is implied by a set of constraints, for the edge $\{i,j\}$ to have the same 
value as $\{i,k\}$ there must exist some $j' \in V_2$ and $k' \in V_3$ such that 
the plane containing variable $x_{ij'}$ and $x_{ik'}$ is chosen, buth this implies 
$x_{j'k'} = 0$ which is a contradiction.
\end{proof}

\begin{lemma}\label{triangle}
Let $T$ be the set of all $0/1$ vectors in the hypercube which satisfy the triangle inequality 
constraints, then $T$ is exactly the set of all $0$-dimensional faces of $\cal R$.
\end{lemma}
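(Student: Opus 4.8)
This is an equality of sets; I would split it into the two inclusions, and the interesting one is the second. For $T\subseteq\{0\text{-dimensional faces of }{\cal R}\}$: let $v\in T$. By Theorem~\ref{partialclique} such a $0/1$ vector is precisely the incidence vector of a partial-clique, equivalently (allowing singleton parts, as permitted above) a multi-clique with vertex-parts $S_1,\dots,S_r$, with $v_{ij}=0$ exactly when $i,j$ share a part. Every one of the $d$ coordinates of $v$ then sits at an endpoint of its allowed range, so all $d$ box constraints are active at $v$; these are the coordinate hyperplanes, hence linearly independent, and their unique common point is $v$ itself. Since $v\in{\cal R}$, it is a $0$-dimensional face of $\cal R$.

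For the reverse inclusion, let $v$ be a $0$-dimensional face (vertex) of $\cal R$ and suppose, toward a contradiction, that $I=\{(i,j): v_{ij}\text{ lies strictly inside its range}\}\neq\emptyset$. Since $v$ is $0$-dimensional, the constraints active at $v$ span $\mathbb{R}^{d}$; no box constraint is active on a coordinate of $I$, so the $I$-coordinates are pinned down entirely by active triangle equalities $z_{ab}=z_{ac}+z_{bc}$. I would then reuse the rigidity brought out in the proof of Lemma~\ref{triangleedge}: a system of triangle equalities that forces two coordinates to coincide --- or forces a coordinate of $I$ to a fixed value --- simultaneously forces some coordinate to be $0$. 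Propagating this through the active set restricted to $I$ should either surface a coordinate of $I$ that is actually $0$, a contradiction, or else leave the $I$-coordinates underdetermined, exhibiting a nonzero direction supported on $I$ along which $v$ can be perturbed inside $\cal R$ --- contradicting extremality. Either way $I=\emptyset$, so $v$ is a $0/1$ vector, and being feasible it lies in $T$.

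Combining the two inclusions gives $T=\{0\text{-dimensional faces of }{\cal R}\}$, and by Theorem~\ref{partialclique} these are exactly the incidence vectors of the multi-cliques on $V$; this sits well with Lemma~\ref{triangleedge}, since every edge of $\cal R$ then runs from the origin --- itself such a vertex --- to a bi-clique, i.e.\ between two $0/1$ vertices. The main obstacle is clearly the second inclusion: the triangle-inequality system is not totally unimodular, so integrality of the vertices of $\cal R$ is not automatic and has to be forced using the specific combinatorial rigidity of the tight triangle equalities (as in Lemma~\ref{triangleedge}); one must be careful here, since relaxing the defining constraints can reintroduce fractional extreme points.
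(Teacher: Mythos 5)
Your first inclusion is fine and coincides with the easy half of the paper's proof: at a feasible $0/1$ point all $d$ box constraints are tight, they already determine the point, so it is a $0$-dimensional face of $\cal R$. The problem is the reverse inclusion, and the step you yourself flag as delicate --- propagating the ``rigidity'' of Lemma~\ref{triangleedge} through the triangle equalities active on $I$ --- is not merely unproven, it is unprovable. The rigidity statement (two coordinates can be forced equal only at the price of forcing some coordinate to $0$) is specific to the rays $\lambda B$ through the origin treated in Lemma~\ref{triangleedge}; at a general point of $\cal R$, coordinates pinned at $1$ by box constraints can transmit equalities between fractional coordinates without ever producing a zero. Concretely, take $n=4$ and the point $z_{12}=z_{13}=z_{14}=1/2$, $z_{23}=z_{24}=z_{34}=1$. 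Every triangle inequality holds, and the six tight constraints $z_{23}=1$, $z_{24}=1$, $z_{34}=1$, $z_{12}+z_{13}=z_{23}$, $z_{12}+z_{14}=z_{24}$, $z_{13}+z_{14}=z_{34}$ are linearly independent and meet in exactly this point, so it is a $0$-dimensional face of $\cal R$ (also a vertex in the paper's Definition 1) that is not a $0/1$ vector. Here $I=\{12,13,14\}$, no coordinate of $I$ is forced to $0$, and the restricted system is not underdetermined, so both horns of your dichotomy fail. Your closing caveat is exactly the issue: without the perimeter inequalities $z_{ij}+z_{jk}+z_{ik}\le 2$ of the metric polytope, the box-plus-triangle system does admit fractional (half-integral) vertices, and therefore the lemma itself is false as stated. (The same example, rescaled to $(1,1,1,2,2,2)$, works if one reads the cube as $[0,2]^d$ per the paper's Definition of $\cal R$.)

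For what it is worth, the paper's own argument breaks at a different point: it asserts that every vertex of $\cal R$ arises as the intersection of an edge of the cube with the triangle polyhedron, or of an edge of the triangle polyhedron with a facet of the cube. That is not how vertices of an intersection of two polyhedra behave in general --- a vertex can arise from two faces of complementary dimensions meeting transversally --- and the point above is exactly such a vertex, lying in the relative interior of a $3$-dimensional face of the cube and of a $3$-dimensional face of the triangle polyhedron. So neither your sketch nor the paper's proof establishes the reverse inclusion; to salvage the lemma one would have to either add the perimeter inequalities to the definition of $\cal R$ or weaken the conclusion to the (correct) statement that $T$ is contained in the set of $0$-dimensional faces.
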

\begin{proof}
It is easy to see that the vertices of the cube are $n$-dimensional 
$0/1$ vectors and edges are formed by joining those vertices which have hamming distance 1. 
Since $\cal R$ is essentially the intersection of the Hamming cube with 
the unbounded polytope corresponding to the triangle inequalities, all the vertices of the
cube which satisfy the triangle inequality constraints will also be the vertices of $\cal R$.
We only need to show that there is no other vertices of $\cal R$. Since vertices form the 
boundary of the edges, the vertices of $\cal R$ are formed as a result of the intersection of edges
of the cube $\cal H$ with $\cal {P}$ or edges of $\cal {P}$ with $\cal {H}$. But we can show that every edge of $\cal H$ intersects the boundary of $\cal {P}$ only at its end points and also 
that every edge of $\cal {P}$ intersects the supporting planes of $\cal {H}$ only
at the vertices of $\cal {H}$ hence the vertices of $\cal R$
can only be the vertices of $\cal {H}$. To show the first claim let $(a_{12}, a_{13}, \dots, \lambda, \dots, a_{(n-1)n}, a_{nn})$
be an edge of $\cal H$ in which all $a_{ij}$'s except one are fixed to either 0 or 1 and
only one coordinate is varying as $\lambda \in [0-1]$. Now consider any plane corresponding to
the triangle inequality constraints of the form $x_{ij} + x_{jk} = x_{ik}$. Since there is 
only one co-ordinate in the line this equality can't be satisfied for any $0 < \lambda < 1$,
and hence intersection is only possible when $\lambda$ is either 0 or 1. Based on
Lemma \ref{triangle-edge} it is easy to see that the edges intersect the supporting
planes of $\cal {H}$ only at vertices of $\cal {H}$.
\end{proof}

\begin{theorem}\label{edges}
The line segment joining two vertices $u$ and $v$ of $\cal R$ is an edge of $\cal R$ 
if and only if the subgraphs of $G_{\lambda u + (1-\lambda)v}$
corresponding to the edges with weights $\lambda$ and $1-\lambda$ respectively
are both bi-cliques.
\end{theorem}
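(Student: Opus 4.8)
The plan is to treat ${\cal R}$ as what it is — the polytope cut out inside the bounding box by the linear triangle inequalities $z_{ij}+z_{jk}\ge z_{ik}$ — and apply the standard face criterion for polytopes: a segment $[u,v]$ joining two vertices of ${\cal R}$ is an edge of ${\cal R}$ if and only if the common solution set of the constraints of ${\cal R}$ that are tight along the \emph{entire} segment is one–dimensional. So the whole proof reduces to identifying which inequalities stay tight on $[u,v]$ and computing the rank of the resulting linear system. This is exactly the viewpoint of Lemma~\ref{triangleedge}, which is the special case $v=\mathbf{0}$, and I would aim to generalize that argument.

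Write $w=\lambda u+(1-\lambda)v$. Since $u,v\in\{0,1\}^{d}$ (Lemma~\ref{triangle}), every coordinate of $w$ is $0$, $1$, $\lambda$ or $1-\lambda$; let $G_0,G_1,G_\lambda,G_{1-\lambda}$ be the corresponding subgraphs of $G_w$, so that $G_0\cup G_1$ are the edges on which $u$ and $v$ agree and $G_\lambda\cup G_{1-\lambda}$ (the two subgraphs named in the statement) are the edges on which they differ. The bookkeeping step is then routine: a coordinate inequality $z_{ij}\ge 0$ or $z_{ij}\le 1$ is tight on all of $[u,v]$ precisely on the edges of $G_0$ and of $G_1$, so those coordinates are completely pinned; and a triangle inequality $z_{ij}+z_{jk}\ge z_{ik}$ is tight for \emph{every} $\lambda$ only if the weight multiset of that triangle is $\{0,\lambda,\lambda\}$, $\{0,1-\lambda,1-\lambda\}$ (``type~I'', which forces two coordinates of equal weight $\lambda$, or two of weight $1-\lambda$, to stay equal), or $\{\lambda,1-\lambda,1\}$ (``type~II'', which forces a $\lambda$–coordinate and a $(1-\lambda)$–coordinate to sum to $1$); the only remaining patterns, $\{0,0,0\}$ and $\{0,1,1\}$, give tight equalities among already–pinned coordinates and so are vacuous. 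Eliminating the pinned $G_0,G_1$ coordinates, the tight system is one–dimensional if and only if the ``linkage graph'' ${\cal L}$ on the edge set of $G_\lambda\cup G_{1-\lambda}$ — join two edges of $G_\lambda$ (or two of $G_{1-\lambda}$) lying in a common type~I triangle, and join a $G_\lambda$–edge to a $G_{1-\lambda}$–edge lying in a common type~II triangle — is connected. So the theorem becomes: ${\cal L}$ is connected $\iff$ $G_\lambda$ and $G_{1-\lambda}$ are both bi-cliques.

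For the ``if'' direction I would use that $u$ and $v$ are partial-cliques (Theorem~\ref{partialclique}): writing $u=K_V\setminus\bigcup_i K_{S_i}$ and $v=K_V\setminus\bigcup_j K_{T_j}$, the weight-$\lambda$ edges are exactly the pairs in different $S$–parts but the same $T$–part, the weight-$(1-\lambda)$ edges are the pairs in the same $S$–part but different $T$–parts, and the weight-$1$ edges are the pairs in different $S$–parts and different $T$–parts. From this one reads off that the two sides of the bipartition of a bi-clique $G_\lambda$ lie in distinct $S$–parts, so every non-edge of $G_\lambda$ on those vertices has weight $0$ — precisely the type~I data that links all of $G_\lambda$ into a single component of ${\cal L}$, exactly as in Lemma~\ref{triangleedge}, and symmetrically for $G_{1-\lambda}$. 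It then remains to bridge the $G_\lambda$–component and the $G_{1-\lambda}$–component by one type~II triangle: I expect to show from the joint partial-clique structure that whenever both bi-cliques are non-empty there is a vertex incident to a $\lambda$–edge and a $(1-\lambda)$–edge simultaneously, and that the closing edge of that length-two path automatically has weight $1$ (because ``different $S$–part'' and ``different $T$–part'' are each forced by the two given edges). For the ``only if'' direction I would prove the contrapositive along the dichotomy of Lemma~\ref{triangleedge}: if, say, $G_\lambda$ is not a bi-clique, then either it has a $\lambda$–edge with no incident weight-$0$ edge, or it is a genuine multipartite graph with at least three parts, and in each case I would exhibit an edge (respectively, the block of edges between one fixed pair of parts) that no type~I \emph{and} no type~II triangle can attach to the rest, so ${\cal L}$ is disconnected and $[u,v]$ is not an edge.

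The main obstacle is the bridging step: showing that $G_\lambda$ and $G_{1-\lambda}$ cannot both be bi-cliques while ``living on disjoint vertex sets'' — a configuration that would split ${\cal L}$ into a pure-$\lambda$ part and a pure-$(1-\lambda)$ part. Ruling this out is a purely combinatorial fact about two partial-clique structures coexisting on the same vertex set, and it is also the reason the ``only if'' direction is more delicate than Lemma~\ref{triangleedge}: there the segment has an endpoint at the origin and only type~I triangles occur, whereas here the type~II triangles can a priori re-glue components, so the disconnectedness argument must account for them as well.
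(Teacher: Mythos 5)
Your reduction of the problem to the connectivity of the linkage graph ${\cal L}$ is sound, and it is in fact a more careful rendering of what the paper does (the paper, too, selects tight hyperplanes of your type~I to equalize the coordinates inside each of $G_\lambda$ and $G_{1-\lambda}$, and then a type~II plane $z_{ij}+z_{jk}=z_{ik}$ with $z_{ik}=1$ to couple the two values). But the step you single out as ``the main obstacle'' --- that two non-empty difference bi-cliques must meet in a vertex so that a type~II triangle can bridge them --- is not a gap you can close: it is false, and with it the ``if'' direction of the theorem as literally stated. Take $n=4$, $u=K_4\setminus K_{\{1,2\}}$ and $v=K_4\setminus K_{\{3,4\}}$. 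Both are partial-cliques, hence vertices of ${\cal R}$. Here $G_\lambda=\{34\}$ and $G_{1-\lambda}=\{12\}$ are single edges, i.e.\ bi-cliques $K_{1,1}$ on disjoint vertex sets, while $G_1=\{13,14,23,24\}$ and $G_0=\emptyset$. No triangle of $K_4$ contains both $\{1,2\}$ and $\{3,4\}$; every triangle has weight multiset $\{1,1,\lambda\}$ or $\{1,1,1-\lambda\}$, so no triangle inequality is tight along the whole segment. The only constraints tight on $[u,v]$ are the four bounds $z_{13}=z_{14}=z_{23}=z_{24}=1$, whose solution set meets ${\cal R}$ in the full two-dimensional square parametrized by $z_{12},z_{34}$ (one checks directly that every point of that square satisfies all triangle inequalities). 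Hence the minimal face of ${\cal R}$ containing $[u,v]$ is that square, $[u,v]$ is one of its diagonals, and it is not an edge of ${\cal R}$.

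So the obstacle you identified is a genuine counterexample to the statement, not a missing lemma. The paper's own ``if'' argument quietly assumes more than the theorem states --- it writes $G_u$ and $G_v$ themselves as bi-cliques and then asserts the existence of linking triples $i,j,k$ with $\{i,j\}\in H_\lambda$ and $\{j,k\}\in H_{1-\lambda}$ --- and it is exactly this linking step that fails in the example above. The parts of your proposal that do go through are correct: the enumeration of the identically tight triangles, the face-dimension criterion, and the verification that when a $\lambda$-edge and a $(1-\lambda)$-edge do share a vertex the closing edge automatically has weight $1$. Your ``only if'' plan along the dichotomy of Lemma~\ref{triangleedge} is also reasonable, subject to the caveat you already raise about type~II triangles re-gluing components. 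To salvage the theorem one must either add a hypothesis guaranteeing the bridge (e.g.\ that the two difference subgraphs share a vertex, or that $G_u$ and $G_v$ are themselves bi-cliques), or restate the characterization directly as ``${\cal L}$ is connected,'' which is the statement your argument actually establishes.
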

\begin{proof}
(\textbf{if part}) Let $u$ and $v$ be the vertices of $\cal R$ and $G_u =  K_V \setminus \{\bigcup_{i=1}^2 K_{S_i}\}$
$G_v = K_V \setminus \{\bigcup_{i=1}^2 K_{R_i}\}$ (since both are bi-cliques). Recall that by definition for any two vertices $u_i$ and $v_j$
in $S_i \cap R_j$ the edge between them is not present. Also the set of edges with weight $\lambda$ 
will be those which are present in $G_u$ and not in $G_v$ vice versa for edges with weights $1-\lambda$. Let 
$H_{\lambda}$ and $H_{1-\lambda}$ be the subgraphs comprising of edges with weights $\lambda$ and $1-\lambda$ respectively. Consider $H_{\lambda}$. Since this graph is given to be a bi-clique we can assume it to 
be $H_{\lambda} = K_{V'} \setminus \{K_{T_1} \cup K_{T_2}\}$. We now choose hyperplanes such that their intersection gives us $G_{\lambda u + (1-\lambda)v}$. For every $i,j,k \in V'$ such that $i,j$ is in some $T_{i_1}$ and $k$ is in some $T_{i_2}$, $i_1,i_2 \in \{1,2\}$ with $T_{i_1} \neq T_{i_2}$ choose the hyperplanes $z_{ij} + z_{jk} = z_{ik}$ and $z_{ij} + z_{ik} = z_{jk}$ among the  set of planes. Note that this implies that $z_{ik} = z_{jk}$ and $z_{ij} = 0$. As a result of selecting these hyperplanes we will get all the variables $z_{ij}$ where $\{i,j\}$ is an edge in $H_{\lambda}$ to be equal. Take this equal value as $\lambda$. Also for all $\{i,j\}$ which are not edge set of $H_{\lambda}$ will have weight 0. Repeat the same exercise of choosing hyperplanes for the subgraph  $H_{1-\lambda}$ but this time instead of taking the equal value as $\lambda$ take the value as $1-\lambda$. For all the rest of the edges $\{i,j\}$ with weight 1 choose the hyperplane $z_{ij} = 1$. To link these values we need to choose some other planes. For all $i,j,k$ such that $\{i,j\} \in H_{\lambda}$ and $\{j,k\} \in H_{1-\lambda}$ choose the plane $z_{ij} + z_{jk} = z_{ik}$. It is now easy to verify that the intersection of all these planes indeed gives the line segment $\lambda u + (1-\lambda) v$.

(\textbf{only if part}) Let $u$ and $v$ be vertices of $\cal R$ and the graph $G_{\lambda u + (1-\lambda) v}$ 
doesn't satisfy above mentioned condition. Similar to Lemma \ref{triangle-edge}, it can be verified that in this case it is always the case that
at least one the following two scenarios will arise: \\
1. The subgraph $H_{\lambda}$ has more than one edge and one can find a pair $\{i,j\}$ such
that $w_{\{i,j\}}$ is $\lambda$ (or $1-\lambda$) and there is no
pair of the form $\{j,k\}$ or $\{i,k'\}$ such that $w_{\{j,k\}} = 0$ or $w_{\{i,k'\}} = 0$. \\
2. The subgraph $H_{\lambda}$ (or $H_{1-\lambda}$) is a collection of disconnected multi-cliques. \\

If it is the first case let $\{i,j\}$ and $\{i_1,j_1\}$ be the two pairs which have weight 
$\lambda$ (w.l.g. assume it is $\lambda$) and since both have value $\lambda$, the intersection 
of the planes which we have chosen implies $z_{ij} = z_{i_1j_1}$. This means there must be some 
be some $k$ such that one of $z_{ij} + z_{jk} = z_{ik}, z_{ij} + z_{ik} = z_{jk} $ and
$z_{ik} + z_{jk} = z_{ij}$ is chosen and $z_{jk} = z_{i_1j_1}$ or $z_{ik} = z_{i_1j_1}$ is implied
by the rest of the planes chosen. But in both these cases, one of $z_{ik}$ or $z_{jk}$ is zero 
which is a contradiction.  \\
For the second case, let $\{i_1,j_1\}$ and $\{i_2,j_2\}$ be two pairs which are in different
multi-cliques but $w_{\{i_1,j_1\}} = w_{\{i_2,j_2\}}$. Therefore, $z_{i_1j_1} = z_{i_2j_2}$ 
must be implied by the chosen set of hyperplanes. But from the discussion presented before, 
such a scenario implies that both $i_1j_1$ and $i_2j_2$ have to be in a connected graph which 
has to be a biclique.
\end{proof}

\subsection{Positive Semi-Definite Constraint}
We now investigate the surface defined by the positive semi-definite constraint $\textbf{1} - Z \succeq 0$.
From Observation 1,  the region defined by the this constraint is 
enclosed inside the region defined by the triangle inequalities. 
In this direction we will prove certain interesting results again relating the 
graphs which some of these matrices correspond to. \\
Given a symmetric $n \times n$ matrix $A$ with $\pm 1$ entries define a new matrix $\widetilde{A}$
such that $\widetilde{A}_{ij} = 1$ if $A_{ij} = -1$ and $\widetilde{A}_{ij} = 0$ if $A_{ij} = 1$.
The matrix $\widetilde{A}$ can be treated as the adjacency matrix of a graph on vertices 
$\{v_1,v_2,\dots v_n\}$. We now prove the following lemma which will be of interest in the 
further discussion.

\begin{lemma}\label{sdp}
Given a symmetric matrix $A= [a_{ij}]$ with $\pm 1$ entries, the expression 
$E(x_1,x_2, \dots, x_n) = \displaystyle \sum_{i}^n a_{ii}x_i^2 + 2 \sum_{i<j}^n a_{ij}x_ix_j$ is non-negative for all
$x_i'$s $\in \mathbb{R}$, iff there exist $b_1,b_2, \dots b_n \in \{1,-1\}$ such that $E$ can be 
expressed as $(b_1x_1 + b_2x_2 +  \dots + b_nx_n)^2$.
\end{lemma}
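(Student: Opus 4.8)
The plan is to prove the characterization of Lemma~\ref{sdp} by analyzing when the quadratic form $E(x_1,\dots,x_n)$ associated with a symmetric $\pm 1$ matrix $A$ is positive semi-definite, and then showing that PSD-ness forces $A$ to be a rank-one matrix of the form $bb^{T}$ for some $\pm 1$ vector $b$. First I would note the easy direction: if $E = (b_1 x_1 + \dots + b_n x_n)^2$ with each $b_i \in \{1,-1\}$, then clearly $E \geq 0$ for all real $x_i$, and expanding the square gives diagonal coefficients $b_i^2 = 1 = a_{ii}$ and off-diagonal coefficients $2 b_i b_j$, so indeed $a_{ij} = b_i b_j \in \{1,-1\}$, consistent with the hypothesis on $A$.

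For the forward direction, I would argue as follows. Since $a_{ii} = 1$ for all $i$ (the diagonal entries are $\pm 1$; here I would use that the relevant matrices have $+1$ on the diagonal, which is the setting of the paper where $x_{ii} = \langle v_i, v_i\rangle = 1$), the matrix $A$ is a symmetric matrix with unit diagonal and $\pm 1$ off-diagonal entries. The quadratic form $E(x) = x^{T} A x$ is non-negative for all $x$ exactly when $A \succeq 0$. The key step is to show that a PSD symmetric matrix with all entries in $\{1,-1\}$ must have rank exactly one. To see this, consider any $2\times 2$ principal submatrix indexed by $i,j$: it is $\begin{pmatrix} 1 & a_{ij} \\ a_{ij} & 1 \end{pmatrix}$, which is PSD with determinant $1 - a_{ij}^2 = 0$, so every such submatrix is singular and rank one. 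Then I would use the standard fact that a PSD matrix all of whose $2\times 2$ principal minors vanish has rank at most one (equivalently, the vectors in a Gram representation $a_{ij} = \langle w_i, w_j\rangle$ all have norm one and are pairwise parallel, hence $w_i = b_i w$ for a common unit vector $w$ and signs $b_i = \pm 1$). This yields $a_{ij} = b_i b_j$ for all $i,j$, so $A = bb^{T}$ and hence $E(x) = (b^{T}x)^2 = (b_1 x_1 + \dots + b_n x_n)^2$, as desired.

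An alternative, more elementary route that avoids invoking Gram-matrix facts: fix the sign pattern by setting $b_1 = 1$ and, for each $j>1$, setting $b_j = a_{1j}$. Then I would verify $a_{ij} = b_i b_j$ for all pairs $i,j$ by plugging suitable test vectors into $E \geq 0$. Concretely, for a triple $1,i,j$, evaluating $E$ on vectors supported on those three coordinates and using the vanishing of the $2\times 2$ minors $a_{1i}, a_{1j}, a_{ij}$ forces $a_{ij} = a_{1i} a_{1j} = b_i b_j$; one checks that any other value would make the restricted $3\times 3$ form indefinite. I expect the main obstacle to be making this sign-propagation argument fully rigorous for all $n$ simultaneously rather than just for triples: one must ensure that the pairwise relations $a_{ij} = b_i b_j$ are globally consistent, which is precisely where the rank-one conclusion (or an induction on $n$ peeling off one coordinate at a time and using non-negativity on the reduced form) does the real work. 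Once the rank-one structure is established, the factorization into a perfect square is immediate.
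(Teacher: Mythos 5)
Your proof is correct, but it takes a genuinely different route from the paper's. The paper first reduces to showing that any $E$ not of the stated form contains a ``bad triple'' $i,j,k$ whose off-diagonal signs $a_{ij},a_{jk},a_{ik}$ are either all $-1$ or contain exactly one $-1$ (whence a test vector supported on those coordinates gives $E=-3a^2<0$), and then proves this dichotomy by induction on $n$, viewing $A$ as a signed complete graph and arguing that a graph with no bad triple splits into classes $S^{+},S^{-}$; the induction step (``put the removed vertex back'') is only sketched there. Your primary argument instead writes $A\succeq 0$ as a Gram matrix, observes that every $2\times 2$ principal minor $1-a_{ij}^2$ vanishes, and invokes the equality case of Cauchy--Schwarz to conclude the Gram vectors are pairwise parallel unit vectors, so $A=bb^{T}$ and $E=(b^{T}x)^2$. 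This is cleaner, fully rigorous, and avoids the induction entirely. Your ``alternative elementary route'' is in fact also complete as sketched, and your stated worry about global consistency is unfounded: once you fix $b_1=1$ and $b_j=a_{1j}$, each relation $a_{ij}=b_ib_j$ is verified independently using only the triple $\{1,i,j\}$ (if $a_{ij}=-a_{1i}a_{1j}$ then the product of the three signs is $-1$, which is exactly the paper's bad-triple condition, and a suitable $(\pm a,\pm a,\pm a)$ test vector gives $E=-3a^2<0$); no induction or further consistency check is needed, so this route is essentially a de-inductified version of the paper's argument. One small repair: the lemma as stated allows $a_{ii}=-1$, so rather than assuming unit diagonal you should derive it from $E(e_i)=a_{ii}\ge 0$, exactly as the paper does; with that one line added, either of your arguments stands on its own.
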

\begin{proof}
Clearly one way is trivial, i.e. if $E$ is of the above form then it must be non-negative. For the converse 
part we have to show that for all expressions $E$ which are not of this form we can find some values of 
$x_i$'s $i = 1,2, \dots, n$, for which the value of expression these choice of $x_i$'s becomes negative. We will 
denote $\textbf{x} = (x_1,x_2, \dots, x_n)$. \\
Firstly, it is easy to see that any $E$ that is non-negative for all $x_i'$s must have the values of $a_{ii}'$s
as 1 because if any $a_{ii} = -1$ then the expression will be negative for the vector $\textbf{x}$ which is $a$
at the $i^{th}$ position and 0 otherwise where $a$ is a non-zero number. Now we would show that if $E$ is not
of the form $(b_1x_1 + b_2x_2 +  \dots + b_nx_n)^2$ then there always exists a triple $i,j,k$ all three distinct
such that among $a_{ij},a_{jk},a_{ik}$ either all are -1 or two are 1 and one is -1. It is easy to see that under this
assumption we will be done as for both these cases we can find an $\textbf{x}$ such that $E(\textbf{x}) < 0$.
If it is the first case i.e. all are -1's then take $\textbf{x}$ as the vector with $a$ at the positions $i,j,k$
and 0 otherwise. The value of $E$ at this $x$ will be $-3a^2 < 0$. If it is the other case then w.l.g assume that 
$a_{ij}= a_{jk} = 1$ and $a_{ik} = -1$. In this case we can choose $\textbf{x}$ which has $a$ at positions $i$  
and $k$ and $-a$ at position $j$. Again the value of the expression will be $-3a^2 < 0$. \\
We now have to prove that our assumption is always true. We will prove this by induction on $n$ \\
\textbf{Base:} Can easily be verified for $n=4$. \\
\textbf{Induction:} Assuming the above statement holds for $k = n$, we have to show it for $k = n+1$.
The above statement can be interpreted in terms of a graph. Given a matrix $A:= a_{ij}$, consider a 
weighted clique on $n$ vertices in which weight of an edge is 1 or -1. 
Therefore every expression $E$ represents a clique. If it is of the form 
$(b_1x_1 + b_2x_2 +  \dots + b_nx_n)^2$, then we can partition the vertex set of the corresponding graphs into 
two sets $S^{+}$ and $S^{-} = V \setminus S^{+}$ such that weights of all edges in $E(S^{+},S^{-})$ will be -1
and all other edges will be 1. Suppose the statement doesn't hold for $k = n+1$ i.e. there exists an expression
which is not of the form $(b_1x_1 + b_2x_2 +  \dots + b_nx_n)^2$ but no triplet exists which satisfies our condition,
i.e. all triples are either 1 or two are 1 and one is -1. In such a case, remove one vertex from the 
set and this property still holds for all triples and hence by induction we can assume this new graph 
can be decomposed into two sets $S^{+}$ and $S^{-}$ as above. Now, if we put the removed vertex back 
then it is easy to verify that we cannot preserve the initial property. 
\end{proof}

\begin{theorem}\label{sdppm1}
An $n \times n$ symmetric matrix $A$ with $\pm 1$ entries is positive semidefinite if and only if 
the graph corresponding to $\widetilde{A}$ is a complete bipartite graph on vertices $\{v_1,v_2,\dots v_n\}$. 
\end{theorem}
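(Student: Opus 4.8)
The plan is to read the theorem off Lemma \ref{sdp} by a coefficient comparison. Observe that the quadratic form attached to $A=[a_{ij}]$ is exactly $\sum_i a_{ii}x_i^2 + 2\sum_{i<j}a_{ij}x_ix_j = E(x_1,\dots,x_n)$, so $A \succeq 0$ if and only if $E(\textbf{x}) \geq 0$ for all $\textbf{x}\in\mathbb{R}^n$. By Lemma \ref{sdp}, this in turn holds if and only if there exist signs $b_1,\dots,b_n\in\{1,-1\}$ with $E = (b_1x_1+\dots+b_nx_n)^2$. Expanding the square and using $b_i^2=1$, this last identity is equivalent to the coefficient conditions $a_{ii}=1$ for all $i$ and $a_{ij}=b_ib_j$ for all $i\neq j$; equivalently, $a_{ij}=-1$ precisely when $b_i\neq b_j$.

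From here I would set $S^{+}=\{i : b_i=1\}$ and $S^{-}=V\setminus S^{+}=\{i : b_i=-1\}$. By the definition of $\widetilde{A}$, the edge $\{v_i,v_j\}$ lies in the graph of $\widetilde{A}$ iff $a_{ij}=-1$, i.e.\ iff $v_i,v_j$ lie on opposite sides of the partition $(S^{+},S^{-})$, so that graph is exactly the complete bipartite graph with parts $S^{+}$ and $S^{-}$; this is the ``only if'' direction. For the converse, suppose the graph of $\widetilde{A}$ is complete bipartite with parts $S^{+},S^{-}$. Simplicity of that graph forces $\widetilde{A}_{ii}=0$, hence $a_{ii}=1$; defining $b_i=1$ on $S^{+}$ and $b_i=-1$ on $S^{-}$, one checks $a_{ij}=b_ib_j$ for $i\neq j$ (same side: edge absent, $a_{ij}=1=b_ib_j$; opposite sides: edge present, $a_{ij}=-1=b_ib_j$). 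Then $A=\textbf{b}\textbf{b}^{\mathsf T}$ with $\textbf{b}=(b_1,\dots,b_n)$, which is manifestly positive semidefinite (rank one, eigenvalue $\|\textbf{b}\|^2=n\geq 0$); equivalently $E(\textbf{x})=(\sum_i b_ix_i)^2\geq 0$.

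I do not expect a genuine obstacle here once Lemma \ref{sdp} is available — the whole argument is the bookkeeping above. The one point worth flagging explicitly is the degenerate case in which one of $S^{+},S^{-}$ is empty, i.e.\ $\widetilde{A}$ is the edgeless graph and $A=\textbf{1}$ (the all-ones matrix, PSD of rank one): for the biconditional to hold verbatim one should adopt the convention that a complete bipartite graph may have one empty part. With that convention in place, the equivalence is immediate in both directions.
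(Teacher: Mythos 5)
Your proof is correct and follows essentially the same route as the paper: both reduce the statement to Lemma \ref{sdp} and read off the bipartition $(S^{+},S^{-})$ from the signs $b_i$ in the perfect square. Your explicit coefficient comparison $a_{ij}=b_ib_j$, the rank-one observation $A=\textbf{b}\textbf{b}^{\mathsf T}$ for the converse, and the flag about the degenerate case of an empty part are welcome clarifications, but they do not change the substance of the argument.
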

\begin{proof}
If the given matrix $A$ is a positive semi-definite matrix then for all vectors 
$\textbf{x} = (x_1,x_2, \dots, x_n) $, $\textbf{x}^{T}A\textbf{x} \geq$. Now in general
for a symmetric matrix $A$, $\textbf{x}^{T}A\textbf{x}$ can be expanded as 
\begin{center}
$\displaystyle \textbf{x}^{T}A\textbf{x} = \sum_{i} a_{ii}x_i^2 + 2 \sum_{i<j} a_{ij}x_ix_j$ 
\end{center}
In our case, each $a_{ij} = \pm 1$. Now, we can appeal to Lemma \ref{sdp} to conclude that 
the above expression will be non-negative iff it is of the form 
$(b_1x_1 + b_2x_2 +  \dots + b_nx_n)^2$ for some $b_i$'s $\in \{1,-1\}$. From the 
proof of Lemma \ref{sdp} the weighted graph can be partitioned into two sets $S^{+}$
and $S^{-}$ such that edges of $E(S^{+}, S^{-})$ are of weight -1 and rest have weight 1.
As per the definition of $\widetilde{A}$ edges with weight 1 are removed and rest 
have weight 1, which makes the graph corresponding to $\widetilde{A}$, a complete bipartite 
graph.
\end{proof}

\begin{lemma}\label{biclique}
All points $Z$ of the form $\lambda A$ where $A:= [a_{ij}]$ corresponds to a bi-clique, are in $\cal P$
for $\lambda \in [0-2]$.
\end{lemma}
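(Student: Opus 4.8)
The plan is to check directly that a point of the form $Z=\lambda A$, with $A$ the $0/1$ adjacency matrix of a complete bipartite graph on $V=\{1,\dots,n\}$, satisfies the two constraints that define $\cal P$, namely $z_{ii}=0$ for all $i$ and $\textbf{1}-Z\succeq 0$. (The box constraint $0\le z_{ij}\le 2$ is not part of the definition of $\cal P$, but it is in any case automatic once $\textbf{1}-Z\succeq 0$, since the principal $2\times 2$ minor of $\textbf{1}-Z$ on $\{i,j\}$ forces $(1-z_{ij})^2\le 1$.) The first constraint is immediate: $A$ has zero diagonal, hence so does $\lambda A$. So everything reduces to showing $\textbf{1}-\lambda A\succeq 0$ for all $\lambda\in[0,2]$.

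For that I would use the algebraic fact behind Theorem \ref{sdppm1}, read in the reverse direction. Let $S_1,S_2$ be the two sides of the bi-clique (the argument is indifferent to whether one of them is empty or a singleton), and let $b\in\{+1,-1\}^{n}$ be the sign vector with $b_i=+1$ for $i\in S_1$ and $b_i=-1$ for $i\in S_2$. Then $(bb^{T})_{ij}=b_ib_j$ is $+1$ exactly when $i$ and $j$ lie on the same side (including $i=j$) and $-1$ exactly when they lie on opposite sides; since $A_{ij}=1$ precisely on the cross edges and $0$ otherwise, this gives the identity $\textbf{1}-2A=bb^{T}$. In particular $\textbf{1}-2A\succeq 0$, being a rank-one square; and $\textbf{1}=\mathbf{e}\mathbf{e}^{T}\succeq 0$ as well, where $\mathbf{e}$ denotes the all-ones vector.

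Finally I would write, for $\lambda\in[0,2]$,
\[
\textbf{1}-\lambda A \;=\; (1-\lambda/2)\,\textbf{1} \;+\; (\lambda/2)\,(\textbf{1}-2A) \;=\; (1-\lambda/2)\,\mathbf{e}\mathbf{e}^{T} \;+\; (\lambda/2)\,bb^{T},
\]
which exhibits $\textbf{1}-\lambda A$ as a nonnegative combination of two positive semidefinite matrices, the coefficients $1-\lambda/2$ and $\lambda/2$ being $\ge 0$ for $\lambda\in[0,2]$; hence $\textbf{1}-\lambda A\succeq 0$. Together with the zero-diagonal observation, this gives $Z=\lambda A\in{\cal P}$, as claimed.

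I do not expect a genuine obstacle here: once the sign vector $b$ is introduced, the proof is the single convex-combination identity above. The only points needing a moment's care are verifying $\textbf{1}-2A=bb^{T}$ case by case ($i=j$; $i,j$ on the same side; $i,j$ on opposite sides), and noting that the degenerate bi-cliques permitted by the paper's conventions cause no trouble, since neither the identity nor the combination depends on the cardinalities of $S_1$ and $S_2$.
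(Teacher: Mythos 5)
Your proof is correct, and it takes a genuinely different (and more explicit) route than the paper's. The paper establishes $\textbf{1}-\lambda A\succeq 0$ via the Gram-matrix characterization of positive semidefiniteness: since a bi-clique has only two vertex classes, it picks two unit vectors $n_1,n_2$ with $\left\langle n_1,n_2\right\rangle = 1-\lambda$ (possible precisely because $1-\lambda\in[-1,1]$, i.e.\ $\lambda\in[0,2]$), assigns $n_1$ to every vertex of $S_1$ and $n_2$ to every vertex of $S_2$, and observes that $\textbf{1}-\lambda A$ is then the Gram matrix of unit vectors. You instead exhibit $\textbf{1}-\lambda A=(1-\lambda/2)\,\mathbf{e}\mathbf{e}^{T}+(\lambda/2)\,bb^{T}$ as a nonnegative combination of two explicit rank-one PSD matrices, using the identity $\textbf{1}-2A=bb^{T}$. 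These are two presentations of the same factorization --- expanding the paper's $n_1,n_2$ in the orthogonal directions of $\mathbf{e}$ and $b$ gives exactly your two rank-one terms, with coefficients $\cos^2\theta=1-\lambda/2$ and $\sin^2\theta=\lambda/2$ --- but yours is self-contained matrix algebra (no appeal to the existence of vectors with a prescribed inner product), it makes the endpoints transparent ($\lambda=0$ gives $\mathbf{e}\mathbf{e}^{T}$, $\lambda=2$ gives the $\pm1$ matrices of Theorem \ref{sdppm1}), and it explicitly checks the diagonal constraint $z_{ii}=0$, which the paper leaves implicit. The paper's vector-realization viewpoint is, however, the one that scales to the subsequent lemma on partial cliques obtained by removing $k$ cliques, where one needs $k$ unit vectors with prescribed pairwise inner products and the admissible range of $\lambda$ genuinely shrinks; a two-term rank-one decomposition is no longer available in that setting.
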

\begin{proof}
$X:= [x_{ij}]$ be the matrix such that $X_{ij} = 1 - (\lambda a_{ij})^2$. Now $\lambda A$ 
will be in $\cal S$ iff the matrix $X$ is positive semi-definite.
Since $A$ corresponds to the adjacency matrix of a bi-clique, there will exists two disjoint 
non-empty subsets of $V = {1,2, \dots, n}$, $S_1$ and $S_2$ such that $S_1 = A \setminus S_2$
and for all $i \in S_1$ and $j \in S_2$, $a_{ij} = 1$. Also for all $i,j \in S_1$, $a_{ij} = 0$. 
and $i,j \in S_2$, $a_{ij} = 0$. Since it is the adjacency matrix of a graph $a_{ii} = 0$. This implies
that $x_{ij} = 1-\lambda$ for all $i \in S_1$ and $j \in S_2$ and $x_{ij} = 1$ for all $i,j \in S_1$
and $i,j \in S_2$. Also $x_{ii} = 1$. Now we know that the matrix $X$ will be a PSD matrix iff 
there exists some $n$, $n$-dimensional vectors $u_1,u_2, \dots, u_n$ such that 
$x_{ij} = \left \langle u_i, u_j \right \rangle$. Since $x_{ii} = 1$ all of these have to be unit vectors.
Since $1-\lambda$ takes values in the range $[-1,1]$. There will always be two vectors $n_1$ and $n_2$
on the unit sphere such that $\left \langle n_1,n_2 \right \rangle = 1-\lambda$. 
Therefore, we can choose the vectors $u_1, \dots, u_n$  as for all $i \in S_1$ take $u_i = n_1$
and for all $i \in S_2$ take $u_i = n_2$. It is easy to verify that for all values of $i,j$,
$x_{ij}$ is indeed  $\left \langle u_i, u_j \right \rangle$.
\end{proof}

The following also is provable similar to Lemma \ref{biclique}

\begin{lemma}
Let $B$ be a partial clique from is obtained by removing 
$k$ cliques form $K_n$, then all vectors corresponding to $\lambda B$ belong to $\cal P$
for $\lambda \in [0- \lambda_k]$ where $\lambda_k \in [0-2]$.
\end{lemma}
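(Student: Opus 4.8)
The plan is to mimic the proof of Lemma \ref{biclique}, which handled the case $k=1$ (a single removed clique, i.e. a bi-clique), and push the argument through for a partial clique $B$ obtained by deleting $k$ cliques $K_{S_1},\dots,K_{S_k}$ from $K_n$. As before, I would set $Z = \lambda B$, write $z_{ij} = \lambda a_{ij}$, and ask for which $\lambda$ the matrix $X$ with $x_{ij} = 1 - z_{ij}$ (equivalently $x_{ij} = 1 - \lambda a_{ij}$, using that $a_{ij}\in\{0,1\}$) is positive semidefinite; this is exactly the condition $\mathbf{1} - Z \succeq 0$ defining $\cal P$ (together with $z_{ii}=0$, which holds since $a_{ii}=0$). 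So $X$ has diagonal entries $1$, entry $1-\lambda$ whenever $\{i,j\}$ is an edge of $B$ (i.e. $i,j$ lie in different $S$'s), and entry $1$ whenever $i\ne j$ both lie in a common $S_t$.

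Next I would realize $X$ as a Gram matrix. Assign to each block $S_t$ a unit vector $w_t$ and put $u_i = w_t$ for every $i\in S_t$; then $x_{ij} = \langle u_i,u_j\rangle$ equals $1$ within a block and $\langle w_s,w_t\rangle$ across blocks $s\ne t$. We need $\langle w_s,w_t\rangle = 1-\lambda$ for all $s\ne t$. So the question reduces to: for which $\lambda\in[0,2]$ do there exist $k$ unit vectors $w_1,\dots,w_k$ with all pairwise inner products equal to $1-\lambda$? This is the standard question of embedding an equiangular/equidistant configuration on the sphere: such vectors exist iff the $k\times k$ matrix $M$ with $1$ on the diagonal and $1-\lambda$ off-diagonal is positive semidefinite. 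Its eigenvalues are $1-(1-\lambda) = \lambda$ (with multiplicity $k-1$) and $1+(k-1)(1-\lambda)$ (simple), so $M\succeq 0$ iff $\lambda\ge 0$ and $1+(k-1)(1-\lambda)\ge 0$, i.e. $\lambda \le \frac{k}{k-1}$. Hence the threshold is $\lambda_k = \min\{2, \frac{k}{k-1}\}$, which for $k\ge 2$ equals $\frac{k}{k-1}\in(1,2]$ and for $k=1$ is all of $[0,2]$, consistent with Lemma \ref{biclique}.

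Then I would finish by checking the remaining constraints are automatic: $z_{ii} = \lambda a_{ii} = 0$, and $0\le z_{ij} = \lambda a_{ij} \le \lambda \le 2$, so the box constraints hold; and PSD-ness of $\mathbf 1 - Z$ is exactly what the Gram construction gives. I expect the main obstacle to be the bookkeeping in the reduction from $X$ to the small $k\times k$ matrix $M$: one must argue that the block-constant ansatz $u_i = w_t$ is without loss of generality, i.e. that $X\succeq 0$ forces (or at least is equivalent to) $M\succeq 0$. This follows because $X$ is a ``blown-up'' matrix, $X = \sum_{s,t} M_{st}\,\mathbf 1_{S_s}\mathbf 1_{S_t}^{\mathsf T}$ plus the correction turning the within-block value from $M_{ss}=1$ to the true value $1$ (which here coincide), so $X = P^{\mathsf T} M P$ for the $\{0,1\}$ incidence matrix $P$ of the partition; then $M\succeq 0 \Rightarrow X\succeq 0$ immediately, which is the direction we need, and the converse (not strictly required) holds since $M$ is a principal-type compression. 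That gives $\lambda\in[0,\lambda_k]$ as claimed.
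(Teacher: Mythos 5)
Your proposal is correct and is exactly the argument the paper intends: the paper gives no proof for this lemma, remarking only that it is ``provable similar to Lemma \ref{biclique}'', and your block-constant Gram construction (one unit vector $w_t$ per removed clique $S_t$, reducing $\mathbf{1}-Z\succeq 0$ to positive semidefiniteness of the $k\times k$ matrix with diagonal $1$ and off-diagonal $1-\lambda$) is precisely that extension, with the added value that you pin down the threshold explicitly via the eigenvalues $\lambda$ and $1+(k-1)(1-\lambda)$. One small bookkeeping caveat: since a partial clique need not have $\bigcup_i S_i = V$, any vertex outside all the $S_i$ must be treated as its own singleton block, so your threshold should read $m/(m-1)$ with $m$ the total number of blocks (removed cliques plus uncovered vertices) rather than $k/(k-1)$; this does not affect the lemma as stated, which only asserts that some $\lambda_k\in[0,2]$ works.
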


\begin{lemma}
An edge $e$ of the cube $[0-2]^d$ is completely contained inside $\cal P$ or completely outside it.
\end{lemma}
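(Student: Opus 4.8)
The plan is to parametrize the edge explicitly and turn membership in $\cal P$ into positive semidefiniteness of a one--parameter family of matrices, then read the dichotomy off the Gram--matrix description of the PSD cone. An edge $e$ of the cube $[0-2]^d$ is obtained by letting one coordinate, say $z_{kl}$, range over $[0,2]$ while every other coordinate $z_{ij}$ is frozen at $0$ or $2$ (the diagonal entries $z_{ii}$ are not coordinates of the cube, being constrained to $0$). For $t\in[0,2]$ write $M(t)=\textbf{1}-Z(t)$ for the matrix attached to the point of $e$ with $z_{kl}=t$: it is symmetric, $M(t)_{ii}=1$ for all $i$, $M(t)_{ij}\in\{+1,-1\}$ for every pair $\{i,j\}\neq\{k,l\}$, and $M(t)_{kl}=1-t$ sweeps the interval $[-1,1]$. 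A point of $e$ lies in $\cal P$ exactly when $M(t)\succeq 0$, and since $\{M(t):t\in[0,2]\}$ is a line segment and the PSD cone is closed and convex, the set $S=\{t\in[0,2]:M(t)\succeq 0\}$ is a closed subinterval of $[0,2]$. So it suffices to rule out the possibility that $S$ is a proper subinterval with nonempty relative interior; I will show that for $n\le 2$ one always has $e\subseteq\cal P$, while for $n\ge 3$ no interior point $t_0\in(0,2)$ can satisfy $M(t_0)\succeq 0$ at all.

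First I would treat the degenerate case $n\le 2$ (so $d\le 1$): then $M(t)$ is the $2\times2$ symmetric matrix with diagonal $1$ and off--diagonal entry $1-t$, which is PSD for every $t\in[0,2]$, so $e$ is completely contained in $\cal P$. For $n\ge 3$ I claim $S\cap(0,2)=\emptyset$, so $e$ lies completely outside $\cal P$ in the sense that its relative interior misses $\cal P$. Suppose instead $M(t_0)\succeq 0$ for some $t_0\in(0,2)$. Being PSD with unit diagonal, $M(t_0)$ is the Gram matrix of unit vectors $u_1,\dots,u_n$, i.e.\ $\langle u_i,u_j\rangle=M(t_0)_{ij}$. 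Pick any index $m\notin\{k,l\}$, which exists because $n\ge 3$. Then $\langle u_k,u_m\rangle=M(t_0)_{km}\in\{+1,-1\}$ and $\langle u_l,u_m\rangle=M(t_0)_{lm}\in\{+1,-1\}$, and since $u_k,u_l,u_m$ are unit vectors the equality case of Cauchy--Schwarz forces $u_m=\pm u_k$ and $u_m=\pm u_l$, hence $u_k=\pm u_l$ and therefore $\langle u_k,u_l\rangle\in\{+1,-1\}$. But $\langle u_k,u_l\rangle=M(t_0)_{kl}=1-t_0\in(-1,1)$, a contradiction. Hence no interior point of $e$ lies in $\cal P$, which establishes the dichotomy. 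The two endpoints of $e$ are cube vertices, i.e.\ symmetric $\pm1$ matrices with unit diagonal, and whether they lie in $\cal P$ (equivalently, on $\partial\cal P$) is decided by Theorem \ref{sdppm1}; I would state this explicitly so that ``completely outside'' is understood to concern the relative interior of $e$.

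The argument is short and involves no real computation, so the only genuine difficulty is making the statement precise. As the analysis shows, for $n\ge 3$ an edge of the cube is \emph{never} contained in $\cal P$ --- at most it meets $\cal P$ in a single endpoint --- so the phrase ``completely contained inside $\cal P$ or completely outside it'' must be read as the true dichotomy ``$e\subseteq\cal P$'' versus ``$\mathrm{relint}(e)\cap\cal P=\emptyset$'', the cube--vertex endpoints being handled separately via Theorem \ref{sdppm1}. I expect this clarification, rather than any step of the proof, to be the main thing a careful write--up must pin down; everything else follows immediately from the Gram--vector argument above together with the closedness and convexity of the PSD cone.
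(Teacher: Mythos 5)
The paper states this lemma with no proof at all, so there is nothing of its own to compare your argument against; your write-up has to stand alone, and it does. The reduction of a cube edge to the one-parameter family $M(t)=\textbf{1}-Z(t)$ with unit diagonal, off-diagonal entries in $\{\pm1\}$ except for the single sweeping entry $1-t$, is the right move; closedness and convexity of the PSD cone correctly make the feasible parameter set a closed subinterval; and the Gram-vector step (equality in Cauchy--Schwarz forcing $u_m=\pm u_k$ and $u_m=\pm u_l$, hence $\langle u_k,u_l\rangle=\pm1$, contradicting $\langle u_k,u_l\rangle=1-t_0\in(-1,1)$) is clean and correct. As a sanity check, the $3\times3$ principal minor of $M(t_0)$ indexed by $k,l,m$ in the all-ones configuration has determinant $-t_0^2<0$, confirming your conclusion.

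The most valuable part of your analysis is the diagnosis that the lemma, read literally, fails for $n\geq 3$: a cube edge can meet ${\cal P}$ in exactly one endpoint. For instance, the edge joining the origin (where $\textbf{1}-Z=\textbf{1}\succeq 0$) to the vertex with a single coordinate equal to $2$ (whose associated graph is a single edge, not a complete bipartite graph spanning $V$, hence not in ${\cal P}$ by Theorem \ref{sdppm1}) intersects ${\cal P}$ in precisely one point, so it is neither completely inside nor completely outside. Your proof in fact shows the intended dichotomy degenerates: for $n\geq 3$ no cube edge is ever contained in ${\cal P}$, and ${\cal P}\cap e$ is always a face of $e$ consisting of at most one cube vertex, those vertices being exactly the ones classified by Theorem \ref{sdppm1}. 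Restating the lemma as ``the relative interior of a cube edge is either contained in ${\cal P}$ or disjoint from it'' (with the endpoints handled separately) is the correct repair, and it is the version actually needed for the vertex characterization of the feasible set. I see no gaps in your argument.
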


\section{Optimizing Over the Vertices is Easy}
In this section we give a characterization of the vertices and arcs of $\cal F$ and show how
the objective function can easily be optimized over the vertices. Let $\Gamma$ be the 
the points of the intersection of the hyperplane supporting {\cal H} (denoted by ${\cal H}^{*}$) the segments of the form $\lambda B$
where $B$ is a partial clique. The vertices $\Delta$, of  $\cal F$, can be divided into types, 

\begin{itemize}
\item[\textbf{Type 1}:] Points formed by the 
intersection of of  ${\cal H}^{*}$ with the edges of $\cal P$ which we call $\Delta_1$ and 
\item[\textbf{Type 2}:] Vertices of the hypercube which satisfy $\cal H$ and $\cal P$. Let this set be denoted by $\Delta_2$. 
\end{itemize}
Notice that $\Delta_1 \subset \Gamma$.

We now prove an interesting result showing that we can infact optimize this
objective function very efficiently over the set of vertices. The reason is that once we find
out the intersection points of ${\cal H}^{*}$ with the with the edges of $\cal T$ 
and subsequently find the points among these which minimizes the 
objective function $\sum_{i,j \in G} \sqrt{z_{ij}}$. But things become simpler
because there aren't many edges which intersect with the levels of the objective function. 

More specifically, our result essentially answers the following question:
given a connected graph $G = (V,E)$ is it possible to efficiently find a partial clique $G'$ on the 
vertex set $V$ such that if an edge $\{i,j\}$ is present in $G$ then it is also present
in $G'$ and if it is not present in $G$ then it is also not present in $G'$. 
The following theorem says given a connected graph $G = (V,E)$ and a subset of 
edges $E'$ one can easily decide whether there exists 
a partial clique $G^* = (V, E^*)$ such that $E' \subseteq E^{*}$ and 
$E \setminus E' \subseteq \bar{E^{*}}$, where $\bar{E^{*}}$ is the set of edges in the complement
of $G^*$. Before that we state the following observation that directly follows from Theorem \ref{partialclique}.

\begin{observation}\label{component}
If $G = (V,E)$ is a partial-clique on $n$ vertices that does not contain a fixed set of 
edges $E'$ which forms a connected component then it does not contain the 
clique defined by the vertices induced over $E'$. 
\end{observation}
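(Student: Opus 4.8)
The plan is to read the conclusion straight off the characterization in Theorem~\ref{partialclique}, whose proof isolates the following ``transitivity of non-edges'' property enjoyed by every partial-clique $G=(V,E)$: for all distinct $i,j,k\in V$, if $ij\notin E$ and $jk\notin E$ then $ik\notin E$. This is exactly the contrapositive reading of the $0/1$ triangle inequality constraint, and a partial-clique satisfies those constraints by Theorem~\ref{partialclique}. Everything else is a one-line induction along a path, so the observation is essentially a corollary.

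First I would fix notation: let $W\subseteq V$ be the set of endpoints of the edges of $E'$, so that ``the clique defined by the vertices induced over $E'$'' is $K_W$, and restate the hypothesis as $E'\cap E=\emptyset$ together with the subgraph $(W,E')$ being connected. The goal is then to show $K_W\cap E=\emptyset$, i.e.\ $uw\notin E$ for every pair $u,w\in W$.

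Next, fix $u,w\in W$. Since $(W,E')$ is connected, choose a path $u=x_0,x_1,\dots,x_\ell=w$ with every edge $x_tx_{t+1}\in E'$; by hypothesis each such edge is absent from $G$. I would then prove $x_0x_t\notin E$ for all $t=1,\dots,\ell$ by induction on $t$. The base case $t=1$ is immediate because $x_0x_1\in E'$. For the inductive step we have $x_0x_t\notin E$ and $x_tx_{t+1}\notin E$, so applying the transitivity-of-non-edges property with $(i,j,k)=(x_0,x_t,x_{t+1})$ gives $x_0x_{t+1}\notin E$ (if two of these three vertices coincide the conclusion is trivial, so we may take them distinct). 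Setting $t=\ell$ yields $uw=x_0x_\ell\notin E$; since $u,w\in W$ were arbitrary, no edge of $K_W$ lies in $E$, which is the claim.

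I do not anticipate a genuine obstacle here; the only point that deserves a moment's care is the interpretation of ``forms a connected component,'' which I read as: the subgraph carried by $E'$ is connected on its vertex set $W$. That connectivity is precisely what supplies the path used in the induction. Given Theorem~\ref{partialclique}, the remaining work is the short induction above, so the observation follows as stated.
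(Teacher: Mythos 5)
Your proof is correct and follows the route the paper intends: the paper offers no written proof, merely asserting that the observation ``directly follows from Theorem~\ref{partialclique},'' and your argument makes that derivation explicit by extracting the transitivity-of-non-edges property (the $0/1$ triangle inequality) and propagating it along a path in the connected subgraph $(W,E')$. This is the same approach, just carefully spelled out.
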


\begin{theorem}\label{vertexeasy}
Given a connected graph $G = (V,E)$ and a subset of edges $E'$ there exists a unique (if any)
partial clique $G^* = (V, E^*)$ such that $E' \subseteq E^{*}$ and 
$E \setminus E' \subseteq \bar{E^{*}}$, where $\bar{E^{*}}$ is the set of edges in the complement
of $G^*$ and that partial-clique can be found efficiently. 
\end{theorem}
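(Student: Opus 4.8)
The plan is to pass to the complement. By the definition of a partial clique, $G^{*}=K_{V}\setminus\bigcup_{i=1}^{r}K_{S_{i}}$ is a partial clique exactly when its complement $\overline{G^{*}}$ is a vertex‑disjoint union of cliques (the $K_{S_{i}}$, together with isolated vertices); equivalently, by Theorem~\ref{partialclique} and the equivalence‑relation structure exposed in its proof, the relation ``$u\equiv v$ iff $uv\notin E^{*}$ or $u=v$'' is an equivalence relation on $V$. Writing $F:=E\setminus E'$, the two requirements $E'\subseteq E^{*}$ and $E\setminus E'\subseteq\overline{E^{*}}$ translate into: the partition of $V$ induced by $\overline{G^{*}}$ must (i) place both endpoints of every edge of $F$ in the same block, and (ii) place the endpoints of every edge of $E'$ in different blocks; pairs that are not edges of $G$ are unconstrained.

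First I would take the ``clique closure'' of the forced‑absent edges: compute the connected components $C_{1},\dots,C_{t}$ of the graph $(V,F)$. Transitivity of $\equiv$ forces each $C_{j}$ to lie inside a single block of $\overline{G^{*}}$, so $\widehat{F}:=\bigcup_{j=1}^{t}K_{C_{j}}\subseteq\overline{E^{*}}$ is forced; this is exactly the content of Observation~\ref{component}. For existence it then suffices to test consistency with (ii): an admissible $G^{*}$ exists if and only if no edge of $E'$ has both endpoints in a common component $C_{j}$, i.e.\ $\widehat{F}\cap E'=\emptyset$. Indeed, if some $E'$‑edge is internal to a $C_{j}$, requirements (i) and (ii) clash; otherwise $\overline{G^{*}}:=(V,\widehat{F})$ is already a disjoint union of cliques that contains $F$ (each $F$‑edge has its endpoints in one component) and avoids $E'$, so $G^{*}:=K_{V}\setminus\widehat{F}$ is an admissible partial clique. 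Constructing $(V,F)$, extracting its components, and scanning $E'$ once runs in $O(|V|+|E|)$ time, which supplies the ``efficiently findable'' part.

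For uniqueness I would show that $K_{V}\setminus\widehat{F}$ is the only admissible partial clique, and this is where connectedness of $G$ must be used. Let $G^{*}$ be admissible with blocks $B_{1},\dots,B_{s}$ of $\overline{G^{*}}$; each $B_{\ell}$ is a union of whole components $C_{j}$. If some $B_{\ell}$ contained two distinct components $C_{a},C_{b}$, then every pair across $C_{a}\times C_{b}$ lies in $\overline{E^{*}}$, hence (by (ii)) avoids $E'$; by the definition of the components of $(V,F)$ no such pair lies in $F$ either; so $G$ has \emph{no} edge between $C_{a}$ and $C_{b}$, and likewise no edge between any two components lumped into a common block. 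One then wants to conclude, from connectedness of $G$, that $s=t$ and each $B_{\ell}$ is a single $C_{j}$, so that $\overline{G^{*}}=\widehat{F}$.

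I expect this last conclusion to be the main obstacle. The complement reduction and the closure construction are routine, but pinning the partition down to the \emph{exact} connected components of $(V,E\setminus E')$, rather than to some coarser partition whose blocks are merely pairwise $E'$‑nonadjacent, is precisely where the hypothesis that $G$ is connected has to be exploited in an essential way: without it, coarser admissible partitions genuinely appear, so any valid argument must use connectivity to forbid every merge of components simultaneously. If the fully sharp statement resists, the fallback I would settle for is that $K_{V}\setminus\widehat{F}$ is the unique \emph{edge‑maximal} admissible partial clique (equivalently, the one removing the fewest cliques), which is enough to substitute it for the ``canonical vertex'' when optimizing $\sum_{i,j\in E}\sqrt{z_{ij}}$ over $\Delta_{1}$.
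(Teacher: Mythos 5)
Your decomposition is the same as the paper's: set $F=E\setminus E'$, take the connected components $C_1,\dots,C_t$ of $(V,F)$, use Observation~\ref{component} (plus the disjointness of the removed cliques) to force each $C_j$ into a single removed clique, test existence by checking that no $E'$-edge is internal to a component, and read off the linear running time. That part is correct and matches the paper's argument step for step; your complement/partition language just makes explicit what the paper leaves implicit.

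The obstacle you flag in the uniqueness step is real, and it is not a failure on your part: it is a gap in the statement itself, and the paper's proof papers over it. The paper rules out coarser partitions by asserting that, since $G$ is connected, ``any other clique will contain at least one edge in $E$''; but connectivity of $G$ only guarantees a \emph{path} between two components $C_a$ and $C_b$, not a direct edge, and the path may detour through a third component, in which case $C_a\cup C_b$ can be merged into one removed clique without swallowing any edge of $E'$. Concretely, take $V=\{1,2,3\}$ and $E=E'=\{\{1,2\},\{2,3\}\}$, so that $F=\emptyset$: both $K_3$ and $K_3\setminus K_{\{1,3\}}$ are partial cliques satisfying $E'\subseteq E^{*}$ and $E\setminus E'\subseteq\bar{E^{*}}$, even though $G$ is connected. (A variant with all components of $(V,F)$ nontrivial also works: on six vertices let $G$ be the path $1\mbox{--}2\mbox{--}\cdots\mbox{--}6$ with $E'=\{23,45\}$; then $\{1,2\},\{3,4\},\{5,6\}$ and $\{1,2,5,6\},\{3,4\}$ both yield admissible partial cliques.) Your fallback is the correct repair: since every admissible partition must coarsen the component partition, $K_V\setminus\widehat{F}$ is the unique edge-maximal admissible partial clique, and that canonical choice is all that the application in Section~7 actually requires; the word ``unique'' in the theorem should be read, or amended, in that sense.
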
 
\begin{proof}
If such a partial-clique exists then it will be of the form $K_V \setminus \{\bigcup_{i=0}^r K_{S_i} \}$
for some subsets $S_1, S_2, \dots, S_r \subset V$ for some $r$. Let $C_1, C_2,\dots, C_k$ be the 
connected components of $G' = (V, E \setminus E')$ with the corresponding vertex sets as $V_1, V_2, \dots, V_k$.
Clearly all these vertex sets are pairwise disjoint. Using Observation \ref{component} 
we can infer that each $K_{V_i}$ is not present in the graph. Now we have to show that some other 
clique or a clique that contains some of these cliques is not missing. The first possibility is easily 
ruled out as $G$ is a connected graph and hence any other clique will contain at least one edge 
in $E$ which will violate the condition that $E' \subseteq E^{*}$. Similarly, for the other case as well
if some other $K_{V'}$ is removed such that $V_s \subset V'$ for some $s$, then also the same condition
will be violated. As evident from the proof such a partial-clique (if exists) can be computed efficiently.
\end{proof}

\section{Conclusion}
In this paper, we propose a well-structured family of programs
 called concave programming and investigate the combinatorial geometric structure
of the feasible set of the program and show how to possibly use them in the context of 
graph partitioning problems like {\sc c-Balanced Separator}. This is a major paradigmatic shift
to attack these problems. It will of immense use to see whether or not such techniques
can give us improved approximation factor for other problems. This
also gives us hope that for many of the problems for which optimal approximation factors
are not known one can possibly rely upon some ``nice'' programs 
which are although not convex but can be potential candidates for 
polynomial time solvability because of their geometric structure. Since this family is a new form of mathematical programming 
that is being used in an approximation algorithm, progress both in the direction of hardness and algorithms 
will provide more insights into the nature of these concave programs and can potentially lead us to 
optimal inapproximability results for various graph-partitioning problems
Another tempting direction inspired from the recent results on the {\sc Unique Games} \cite{ABS2010}
is to exploit these combinatorial geometric ideas to design sub-exponential time $O(1)$-approximation algorithms  
for the problem. 

\section{Acknowledgments}
The author would like to thank Sanjeev Arora for discussing the prospects of mathematical programming
paradigms beyond SDPs. Thanks to Purushottam Kar for going through an earlier draft of the paper
and sending his comments.

\appendix
\section{Appendix}

\subsection{Proof of Theorem \ref{concave}}
\begin{proof}
Since $z^{p/2}$ is concave for $p<2$ for $z>0$, and the sum of concave functions
is also concave, the objective function is clearly concave. For the constraints
defining the feasible set, $\sum_{i < j} z_{ij} \geq c(1-c)n^2$ and $z_{ii} = 0$
are convex. The constraint $\textbf{1} - Z \succeq 0 $ can be shown to be convex 
as follows: Let $Z_1$ and $Z_2$ be two matrices corresponding to the variables $z_{ij}$'s
which lie in the feasible set. Therefore, they satisfy $\textbf{1} - Z_1 \succeq 0$ and 
$\textbf{1} - Z_2 \succeq 0$. Now, consider the line segment for $\lambda \in [0-1]$
$\lambda Z_1 + (1-\lambda) Z_2$ and the matrix $\textbf{1}-(\lambda Z_1 + (1-\lambda) Z_2)$. This
is positive semidefinite as it can be rewritten as $\lambda(\textbf{1}- Z_1) + (1-\lambda)(\textbf{1} - Z_2)$ 
which is a sum of two PSD matrices. \\
The only type of constraint left are the triangle inequality constraints. Consider 
an inequality of this type say $z_{ij}^{p/2} + z_{jk}^{p/2} \geq z_{ik}^{p/2}$.  
In general, let us look at the region $x^{r} + y^{r} \geq z^r$ for $0 < r < 1$.
If $r = 1/q$ for $q > 1$ then this region is same as $ \left(x^{1/q} + y^{1/q}\right)^{q} \geq z$.
Let $p_1 = (x_1,y_1,z_1)$ and $p_2 = (x_2,y_2,z_2)$ be two points which lie in this region, i.e. 
$\left({x_1}^{1/q} + {y_1}^{1/q}\right)^{q} \geq z_1$ and 
$\left({x_2}^{1/q} + {y_2}^{1/q}\right)^{q} \geq z_2$. To prove 
the convexity of the region we need to show that for any $\lambda \in [0-1]$,
$(\lambda x_1 + (1-\lambda) x_2, \lambda y_1 + (1-\lambda) y_2, \lambda z_1 + (1-\lambda) z_2)$
also lies inside the region for all such points $p_1$ and $p_2$. Therefore, we have to show
$\displaystyle \lambda z_1 + (1-\lambda) z_2 \leq \left((\lambda x_1 + (1-\lambda) x_2)^{\frac{1}{q}} + (\lambda y_1 + (1-\lambda) y_2)^\frac{1}{q} \right)^{q}$.
Thus we will be done if we show 
$\displaystyle \lambda ({x_1}^{1/q} + {y_1}^{1/q})^{q} + (1-\lambda) ({x_2}^{1/q} + {y_2}^{1/q})^{q} \leq \left((\lambda x_1 + (1-\lambda) x_2)^{\frac{1}{q}} + (\lambda y_1 + (1-\lambda) y_2)^\frac{1}{q} \right)^{q}$.
which is equivalent to proving that the function $f(x,y) = \left({x}^{\frac{1}{q}} + {y}^{\frac{1}{q}}\right)^{q}$
is concave. We will prove this by showing that the Hessian of this function is negative-definite for all $x,y$. We now
compute the entries of the Hessian matrix. The following calculations are easy to verify,
\begin{eqnarray*}
\frac{\partial f}{\partial x} &=& \left(1 + \frac{y^{\frac{1}{q}}}{x^{\frac{1}{q}}} \right)^{q-1} ; 
\frac{\partial f}{\partial y} = \left(1 + \frac{x^{\frac{1}{q}}}{y^{\frac{1}{q}}} \right)^{q-1} ; 
\frac{\partial^2 f}{\partial x^2} = -\left(\frac{q-1}{q}\right) \left(1 + \frac{x^{\frac{1}{q}}}{y^{\frac{1}{q}}} \right)^{q-2} \frac{y^{\frac{1}{q}}}{x^{\frac{q+1}{q}}} ;\\
\frac{\partial^2 f}{\partial y^2} &=& -\left(\frac{q-1}{q}\right) \left(1 + \frac{y^{\frac{1}{q}}}{x^{\frac{1}{q}}} \right)^{q-2}  \frac{x^{\frac{1}{q}}}{y^{\frac{q+1}{q}}} ;  \frac{\partial^2 f}{\partial x \partial y} = \left(\frac{q-1}{q}\right) \left( \frac{1}{x^{\frac{1}{q}}} + \frac{1}{y^{\frac{1}{q}}}\right)^{q-2} \frac{1}{y^{\frac{1}{q}}x^{\frac{1}{q}}} = \frac{\partial^2 f}{\partial y \partial x}
\end{eqnarray*}

In order to show that the Hessian is negative-definite we have to show that for any $\alpha, \beta \in \mathbb{R}$,
the following expression is always non-positive for all $x,y > 0$ (for $x,y$ as 0 the derivatives do not exist):
\begin{eqnarray*}
&& \alpha^2 \frac{\partial^2 f}{\partial x^2} + \beta^2 \frac{\partial^2 f}{\partial y^2} +  2 \alpha \beta \frac{\partial^2 f}{\partial x \partial y} \\
& = & -\left( \frac{q-1}{q} \right) \left[ \alpha^2 \left(1 + \frac{x^{\frac{1}{q}}}{y^{\frac{1}{q}}} \right)^{q-2}  \frac{y^{\frac{1}{q}}}{x^{\frac{q+1}{q}}} + \beta ^2 \left(1 + \frac{y^{\frac{1}{q}}}{x^{\frac{1}{q}}} \right)^{q-2} \frac{x^{\frac{1}{q}}}{y^{\frac{q+1}{q}}} - \left( \frac{1}{x^{\frac{1}{q}}} + \frac{1}{y^{\frac{1}{q}}}\right)^{q-2} \cdot \frac{2\alpha \beta}{y^{\frac{1}{q}}x^{\frac{1}{q}}}\right] \\
&=& -\left( \frac{q-1}{q} \right) \left({x}^{\frac{1}{q}} + {y}^{\frac{1}{q}}\right)^{q} \left[ \frac{\alpha^2y^{\frac{1}{p}}}{x^{\frac{2q-1}{q}}} + \frac{\beta^2x^{\frac{1}{q}}}{y^{\frac{2q-1}{q}}} - \frac{2\alpha \beta}{x^{\frac{q-1}{q}}y^{\frac{q-1}{q}}} \right] \\
& = & -\left( \frac{q-1}{q} \right) \left({x}^{\frac{1}{q}} + {y}^{\frac{1}{q}}\right)^{q} \left[ 
\frac{\alpha^2y^2 + \beta^2x^2 - 2\alpha \beta xy }{x^{\frac{2q-1}{q}}y^{\frac{2q-1}{q}}} \right] 
= -\left( \frac{q-1}{q} \right) \left({x}^{\frac{1}{q}} + {y}^{\frac{1}{q}}\right)^{q} \left[ 
\frac{(\alpha y - \beta x)^2 }{x^{\frac{2q-1}{q}}y^{\frac{2q-1}{q}}} \right] 
\end{eqnarray*}
which is non-positive for all $\alpha, \beta$
This proves that the region $x^{p/2} + y^{p/2} \geq z^{p/2}$ is a convex set for all $0< p < 2$. Hence the 
intersection of all the triangle inequality constraints is also convex.
\end{proof}

\end{document}